\documentclass[aps,prl,twocolumn]{revtex4}

\usepackage{subfiles}
\usepackage{amssymb,amsmath}
\usepackage{bm, graphicx, amsmath}
\usepackage{bbm}
\usepackage[section]{placeins}
\usepackage{tikz}
\usepackage{multirow}
\usepackage{ytableau}
\ytableausetup{boxsize=0.8em}
\usepackage[mathscr]{eucal}
\usepackage{graphicx}
\usepackage{color}
\usepackage{enumitem}
\usepackage{float}
\usepackage{hyperref}
\usepackage{xr-hyper}

\newcommand{\tr}{\operatorname{tr}}
\newcommand{\Span}{\operatorname{span}}

\newcommand{\two}[2]{\begin{array}{c}\\[-1.5em]\scriptstyle #1\\[-.3em] \scriptstyle #2\end{array}}

\newcommand{\Hc}{\mathcal{H}}
\newcommand{\Hs}{\mathscr{H}}

\newcommand{\Gc}{\mathcal{G}}
\newcommand{\Xc}{X}

\newcommand{\Ss}{\mathscr{S}}

\newcommand{\stab}{\operatorname{stab}}
\newcommand{\cen}{\operatorname{cen}}

\newcommand{\co}[1]{#1\kern-.05em\Hc}
\newcommand{\xbf}{\boldsymbol{x}}

\renewcommand{\r}{r}

\newtheorem{theo}{Theorem}
\newtheorem{lem}{Lemma}
\newtheorem{prop}{Proposition}
\newtheorem{cor}{Corollary}

\begin{document}


\parskip=.7em

\title{Quantum-Enhanced Zero-Error Communication and Storage under Positional Uncertainty}
\author{A. Diebra$^{1}$, D. Gonz\'alez-Lociga$^{1}$, M. Hillery$^{2,3}$, J. Calsamiglia$^{1}$, and  E. Bagan$^{1}$}

\affiliation{$^{1}$F\'{i}sica Te\`{o}rica: Informaci\'{o} i Fen\`{o}mens Qu\`antics, Universitat Aut\`{o}noma de Barcelona, 08193 Bellaterra (Barcelona), Spain
$^{2}$Department of Physics and Astronomy, Hunter College of the City University of New York, 695 Park Avenue, New York, NY 10065, USA\\
$^{3}$Physics Program, Graduate Center of the City University of New York, 365 Fifth Avenue, New York, New York 10016, USA
}

\begin{abstract}
Permutation channels model communication and storage scenarios in which the positional identity of the physical carriers is partially or completely lost, so that the transmitted information is only accessible up to an unknown reordering. Here we show that quantum mechanics can dramatically enhance zero-error communication through such channels. For cyclic reorderings of $n$ $d$-level systems, and in the absence of positional metadata, the number of classical zero-error messages scales asymptotically as $d^n/n$, whereas quantum protocols can fully recover the identity-channel value~$d^n$. Ancilla-assisted protocols further increase this number to $d^{2n}/n$, enabling dense coding under positional uncertainty. We also analyze dihedral permutation channels and derive general Pólya-like formulas for the number of distinguishable messages in a broad class of permutation groups. Finally, for the symmetric group $S_n$, corresponding to complete scrambling of the information carriers, the number of distinguishable messages scales as $n^{d-1}$ classically, compared with $n^{d(d+1)/2-1}$ for quantum protocols and $n^{d^2-1}$ in the ancilla-assisted setting. Our results establish a fundamental quantum advantage for communication and storage under positional uncertainty.
\end{abstract}

\pacs{03.67.-a, 03.65.Ta,42.50.-p }
\maketitle

In many communication and storage systems, information is not only vulnerable to noise and losses, but also to a distinct form of uncertainty: the loss of positional identity. Messages may be fragmented, shuffled, displaced, or retrieved in an order unrelated to the one in which they were originally encoded. These situations can be naturally modeled by {\it permutation channels}~\cite{walsh2007coding,langberg2015coding,kovavcevic2014zero}, where the transmitted data is only accessible up to an unknown reordering of its constituent parts. As modern technologies increasingly rely on large collections of mobile, distributed, or indistinguishable carriers, understanding how much information can still be reliably transmitted or stored under such uncertainty becomes both a practical and theoretical challenge.

Permutation channels arise naturally whenever positional information is degraded or lost. A particularly important example is DNA-based archival storage~\cite{heckel2017fundamental,sabary2024survey}, where data is encoded into a large collection of short DNA strands that are later retrieved in an essentially unordered fashion through sequencing, requiring substantial resources to recover their original arrangement. Similar issues also appear in packet-based networks~\cite{feng2009packet,lin2023packet}, asynchronous transmission protocols~\cite{tchamkerten2012asynchronous}, and molecular communication~\cite{Nakano_Eckford_Haraguchi_2013,gohari2016information}. Such scenarios naturally raise questions about the ultimate limits of communication under positional uncertainty, particularly in the form of permutation uncertainty.

In practice, positional uncertainty is typically mitigated by incorporating additional metadata, such as sequence numbers, timestamps, synchronization markers, or explicit addressing information~\cite{schwartz2012accurate}. These mechanisms effectively restore ordering information, but inevitably consume communication or storage resources and thereby reduce the effective transmission rate. In DNA-based storage, for instance, a non-negligible fraction of each strand must be devoted to indexing information in order to reconstruct the original ordering after sequencing~\cite{heckel2017fundamental}. More generally, the need to recover positional identity introduces an intrinsic overhead in communication protocols operating under permutation uncertainty.

These limitations naturally motivate the exploration of genuinely quantum communication strategies for permutation channels. Throughout this work, the carriers are not supplemented with positional metadata, since this would enlarge the local dimension and partially break the symmetry underlying the permutation channels themselves. Under the same assumptions, quantum mechanics nonetheless allows such reference information to be encoded into coherent superpositions and relative phases, opening the possibility of qualitatively different communication protocols. It is then natural to ask whether these quantum effects can provide an advantage for zero-error communication and storage in such constrained scenarios.

For permutation uncertainty, the possible reorderings naturally define a permutation group acting on the set of length-$n$ strings over an alphabet of size $d$ used to encode messages. Classical messages are then identified up to the corresponding group orbits. Consequently, the number of perfectly distinguishable classical messages grows much more slowly than the total number $d^n$ of possible strings, often leading to vanishing asymptotic zero-error rates~\cite{sarkar2025identification}.

Quantum mechanics does not, in general, remove this obstruction or restore a finite asymptotic rate. But coherent superpositions can substantially enlarge the number of messages that can be retrieved with certainty. For example, under a cyclic permutation channel, the classical number scales asymptotically as $d^n/n$, whereas quantum encodings can recover the full identity-channel value $d^n$. Ancilla-assisted strategies increase this further to $d^{2n}/n$. Under the full symmetric group $S_n$, the corresponding classical number grows only polynomially, as~$n^{d-1}$, while the numbers achievable with quantum and ancilla-assisted encodings scale respectively as~$n^{d(d+1)/2-1}$ and $n^{d^2-1}$.

Accordingly, our focus is not on asymptotic transmission rates, but on the single-shot zero-error number of messages that can be transmitted or stored with certainty, and on the symmetry-based mechanisms that make this possible.  The loss of positional identity is not merely ordinary noise: it endows the set of possible messages with an underlying symmetry structure. For certain permutation groups, these mechanisms can be realized through quantum Fourier-type encodings over the corresponding orbits, making the effect both conceptually transparent and potentially accessible to current experimental platforms~\cite{bluvstein2022quantum}.

Related questions concerning communication under incomplete or missing reference information have previously appeared, for example, in the context of quantum communication without a shared reference frame~\mbox{\cite{bartlett2003classical,bartlett2007reference,bartlett2009quantum}}. Permutations have also been considered as controlled encoding resources in entanglement-assisted communication protocols~\cite{wang2021permutation}. Here, however, we focus on the fundamentally different setting in which the ordering of the physical carriers is itself inaccessible.

This Letter is organized as follows. We first introduce a general framework for communication and storage under permutation uncertainty. We then specialize to cyclic positional uncertainty, illustrated by a ring of $n$ cold atoms with $d$ internal states used as a memory device, where the atoms may collectively rotate around the ring. In this setting, the classical problem reduces to necklace counting~\cite{vanLintWilson2001,harary1973graphical}. We next allow the ring to flip over, leading to the dihedral group $D_n$, and derive the corresponding classical and quantum asymptotic behaviors. The analysis is then extended to other physically motivated permutation groups. Finally, we consider full permutation uncertainty, corresponding to the symmetric group $S_n$.

\noindent{\em General framework.} As mentioned in the introduction, messages are encoded into strings of symbols, which may be pictured as balls with $d$ possible colors. In the quantum setting, the symbols are replaced by quantum systems with $d$ orthonormal internal states. A permutation group $\Gc\le S_n$ (i.e., a subgroup of the symmetric group~$S_n$) acts on the classical strings \mbox{$\xbf=x_0x_1\dots x_{n-1}$} by permuting their positions according to \mbox{$\sigma\xbf:=x_{\sigma^{-1}(0)}x_{\sigma^{-1}(1)}\cdots x_{\sigma^{-1}(n-1)}$, for $\sigma\in\Gc$}. Likewise, in the quantum setting, the action of $\sigma$ defines a unitary representation through \mbox{$U(\sigma)|\xbf\rangle:=|\sigma\xbf\rangle$}, where $|\xbf\rangle$ is shorthand for the computational basis state \mbox{$|x_0\rangle\otimes|x_1\rangle\otimes\cdots\otimes|x_{n-1}\rangle$}. We denote by $X$ the set of all such strings or, by a slight abuse of notation, the corresponding computational-basis vectors.

Recall that the $\Gc$-orbit, or simply orbit, of a string \mbox{$\xbf\in X$} is the set \mbox{$O_{\xbf}=\{\sigma\xbf:\sigma\in\Gc\}$}, namely the strings obtained from $\xbf$ through the action of the group. In the quantum setting, we use the same notation for the set \mbox{$\{U(\sigma)|\xbf\rangle:\sigma\in\Gc\}$}. Strings belonging to the same orbit therefore cannot encode distinct classical messages~through the $\Gc$-channel. The number of classical messages that can be transmitted with certainty is thus simply the number $N_{\rm c}$ of orbits in $X$, namely $N_{\rm c}=|X/\Gc|$ in standard mathematical notation. Since orbits are equivalence classes, they partition $X$ into disjoint subsets.

Burnside's lemma~\cite{vanLintWilson2001,supp_mat} states that
\begin{equation}
N_{\rm c}={1\over|\Gc|}\sum_{\sigma\in\Gc}| X_\sigma|
={1\over |\Gc|}\sum_{\sigma\in\Gc} d^{c(\sigma)},
\label{burnside}
\end{equation}
where $ X_\sigma$ denotes the fixed-point set of $\sigma$, i.e., the set of strings in $ X$ left invariant by $\sigma$. The second equality is the simplest instance of Pólya's enumeration theorem~\cite{harary1973graphical}, where $c(\sigma)$ denotes the number of cycles in the disjoint cycle decomposition of $\sigma$. It follows immediately from the identity $| X_\sigma|=d^{c(\sigma)}$, since a string is fixed by $\sigma$ if and only if all positions belonging to the same cycle of $\sigma$ carry the same symbol. One may therefore choose the symbol associated with each cycle independently, yielding $d^{c(\sigma)}$ invariant strings.

In the quantum setting, the Hilbert space of the information carriers decomposes into orthogonal subspaces associated with the different orbits,
\begin{equation}
(\mathbb{C}^d)^{\otimes n}=\bigoplus_j \Span(O_{\xbf_j}),
\end{equation}
where $\{\xbf_j\}_{j=0}^{N_{\rm c}-1}$ is a set of representatives, one from each orbit. 

Every orbit subspace further decomposes into irreducible representations
(irreps) of $\Gc$~\cite{Serre1977},
\begin{equation}
\Span(O_{\xbf_j})
=
\bigoplus_{\mu}\Hs_\mu\otimes\mathbb{C}^{m_{j,\mu}},
\end{equation}
where $\Hs_\mu$ denotes the irrep labeled by~$\mu$, appearing with
multiplicity~$m_{j,\mu}$. The spaces $\mathbb{C}^{m_{j,\mu}}$ are the
multiplicity spaces, on which $\Gc$ acts trivially. By definition of
irreducibility, no nontrivial proper subspace of $\Hs_\mu$ is invariant under
the action of $\Gc$. Thus, within each of the $m_{j,\mu}$ copies of
$\Hs_\mu$, one cannot select two or more linearly independent states that
remain perfectly distinguishable under the unknown group action. Such a copy
therefore encodes exactly one perfectly distinguishable message, which can be
decoded by a projective measurement onto~the corresponding copy of~$\Hs_\mu$.
We denote by $\{|u^\mu_\alpha\rangle\}_{\alpha=0}^{m_\mu-1}$ the orthogonal
encoding states gathered from the different orbit subspaces and call the
subspace they generate the ``message space''. Hence, the number of perfectly
distinguishable messages that can be encoded without ancillary systems is
\begin{equation}
N_{\rm q}
=
\sum_\mu\Bigg(\sum_j m_{j,\mu}\Bigg)
=
\sum_\mu m_\mu,
\label{M_q}
\end{equation}
i.e., the sum of all multiplicities.

If an ancillary system of dimension $m_\mu$ is available and is not affected by the permutation channel, this message space can be used for dense coding~\cite{Bennett1992,Werner2001}. Let $\{|u^\mu_\alpha\rangle\}_{\alpha=0}^{m_\mu-1}$ be 
the orthonormal set of encoding vectors, and let $\{|\varphi^\mu_\alpha\rangle\}_{\alpha=0}^{m_\mu-1}$ be an orthonormal basis of the ancilla. The sender and receiver may share the maximally entangled state
\begin{equation}
|\Phi_\mu\rangle
=
{1\over\sqrt{m_\mu}}
\sum_{\alpha=0}^{m_\mu-1}
|u^\mu_\alpha\rangle\otimes|\varphi^\mu_\alpha\rangle .
\end{equation}
By applying a complete set of $m_\mu^2$ orthogonal unitary transformations on the message space, the sender can generate $m_\mu^2$ mutually orthogonal maximally entangled states. Since the ancillary system is unaffected by the permutation channel, these states remain perfectly distinguishable at the receiver. Thus, the sector associated with $\mu$ contributes $m_\mu^2$ distinguishable messages, and the ancilla-assisted number of messages is
\begin{equation}
N_{\rm a}=\sum_\mu m_\mu^2
=
{1\over |\Gc|} \sum_{\sigma\in\Gc} \left|\chi(\sigma)\right|^2,
\end{equation}
where $\chi(\sigma)$ is the character of the representation $U(\sigma)$ on $(\mathbb{C}^d)^{\otimes n}$, and we have used the standard representation-theoretic identity relating the squared multiplicities to the absolute value of the character. Moreover, \mbox{$\chi(\sigma)=\sum_{\xbf\in X}\langle\xbf|U(\sigma)|\xbf\rangle=|X_\sigma|=d^{c(\sigma)}$}, and we therefore obtain the compact expression
\begin{equation}
N_{\rm a}= {1\over |\Gc|} \sum_{\sigma\in\Gc} d^{2c(\sigma)}.
\label{N_a polya}
\end{equation}
Hence, dense coding over a permutation channel allows the transmission of $N_{\rm a}$ mutually distinguishable messages. This is the ultimate limit permitted by quantum mechanics under permutation uncertainty.

\noindent{\em Cyclic group.} To illustrate the above formalism, we consider the simplest nontrivial setting: a cyclic permutation group. As a concrete example, we revisit the ring of cold atoms introduced above as a memory device. In this scenario, the carriers of information are arranged on a circle, and their relative ordering is well defined, but there is no distinguished reference position. In other words, the stored message is defined only up to an unknown cyclic shift of the atomic positions.
 
Since the symmetry group $\Gc$ in this example is cyclic, and therefore abelian, all its irreps are one-dimensional~\cite{Serre1977}, i.e., \mbox{$\dim(\Hs_\mu)=1$} for all~$\mu$. Using \mbox{$\sum_\mu m_\mu \dim(\Hs_\mu)=\dim[(\mathbb{C}^d)^{\otimes n}]=d^n$}, Eq.~(\ref{M_q}) immediately gives
\begin{equation}
N_{\rm q}=d^n.
\label{cyclic_general}
\end{equation}
%


The corresponding classical and ancilla-assisted expressions follow from the
cycle structure of the powers of the permutation
\mbox{$r=(0,1,\dots,n-1)$}, representing a one-step shift and written in
standard cycle notation, which generates
\mbox{$\Gc=\langle r\rangle\simeq\mathbb Z_n$}. The powers of $r$ take the
form $r^k=(0,k,2k,\dots)$, with indices understood modulo~$n$. Thus, if $p$
is the smallest positive integer such that $pk=0\pmod n$, then $r^k$
decomposes into $n/p$ cycles of length~$p$. Elementary modular arithmetic
gives $p=n/\gcd(k,n)$, where $\gcd$ denotes the greatest common divisor.
Hence, $c(r^k)=\gcd(k,n)$ (see~\cite{supp_mat} for a worked-out example).
%
%
%
%
%
Equations~(\ref{burnside}) and~(\ref{N_a polya}) then yield
\begin{equation}
N_{\rm c}
={1\over n}\sum_{k=0}^{n-1}d^{\gcd(k,n)};
\qquad
N_{\rm a}
={1\over n}\sum_{k=0}^{n-1}d^{2\gcd(k,n)}.
\label{Nc&Na}
\end{equation}
The first quantity coincides with the number of distinct necklaces that can be formed with $n$ beads of $d$ colors, a well-known problem in classical combinatorics.
In the asymptotic regime $n\to\infty$, the dominant contribution comes from the term $k=0$, for which $\gcd(0,n)=n$, leading to
\begin{equation}
N_{\rm c}\sim {d^n\over n};
\qquad
N_{\rm a}\sim {d^{2n}\over n}.
\end{equation}
In other words, without ancillary systems, quantum encoding allows a factor-$n$ increase in the number of perfectly distinguishable classical messages compared with the corresponding classical protocol under the same symmetry constraints, fully recovering the identity-channel value. Ancillary systems further give an additional factor-$d^n$ enhancement.

The orthonormal basis $\{|u^\mu_\alpha\rangle\}_{\alpha=0}^{m_\mu-1}$ of the message space, consisting of vectors that transform under $\mathbb{Z}_n$ according to its characters [i.e., phases $(\omega_n)^\mu$, with~\mbox{$\omega_n={\rm e}^{2\pi i/n}$}], can be constructed by applying the quantum Fourier transform (QFT)~\cite{shor1999polynomial,nielsen2010quantum} along the orbits. To this end, we first determine a set of orbit representatives~$\xbf_j$, for instance using the Fredricksen--Kessler--Maiorana (FKM) algorithm~\cite{FredricksenMaiorana1978}, which generates them efficiently in lexicographical order. Denoting by $n_j=|O_{\xbf_j}|$ the size of the orbit of $\xbf_j$, we define the Fourier basis within each orbit as
\begin{equation}
|\tilde u^k_j\rangle=\frac{1}{\sqrt{n_j}}\sum_{l=0}^{n_j-1} (\omega_{n_j})^{-kl} U(\r^l)|\xbf_j\rangle,
\label{ebc26.03.26-1}
\end{equation}
%
%
%
where $k\in\{0,1,\dots,n_j-1\}$. These states diagonalize the action of~$\r$,
$
U(\r)|\tilde u^k_j\rangle
=
(\omega_{n_j})^k |\tilde u^k_j\rangle.
$
By the orbit--stabilizer theorem~\cite{harary1973graphical,supp_mat}, $n_j$ divides~$n$, so that $n=n_jp_j$ for some integer~$p_j$. Then,
$
\omega_{n_j}
=
(\omega_n)^{p_j},
$
and we obtain
$
U(\r)|\tilde u^k_j\rangle
=
(\omega_n)^{p_jk} |\tilde u^k_j\rangle.
$
Each $|\tilde u^k_j\rangle$ therefore transforms according to a one-dimensional irreducible representation (character) of~$\mathbb{Z}_n$, namely the one labeled by~$\mu=p_jk$. Collecting, from all $N_{\rm c}$ orbits, the states with the same value of~$\mu$ yields the desired basis~$\{|u^\mu_\alpha\rangle\}_{\alpha=0}^{m_\mu-1}$.


\noindent{\em Dihedral group $D_n$.} Assume now that the ring of~$n$ cold atoms not only drifts rigidly but can also be flipped. This enlarges the symmetry group, which is now generated by two elements: $r$, with $r^n=e$, and $s$, with $s^2=e$ and $srs=r^{-1}$. The resulting group, \mbox{$D_n=\{e,r,r^2,\dots ,r^{n-1},s,sr,sr^2,\dots ,sr^{n-1}\}$}, of order $2n$, is the dihedral group: the group of symmetries of a regular polygon with $n$ sides. The elements $sr^k$ correspond to reflections. For even $n$, these reflections occur along axes passing through opposite vertices or opposite sides, whereas for odd $n$ each reflection axis passes through one vertex and the midpoint of the opposite side.

The number of distinguishable classical messages (the bracelet-counting problem~\mbox{\cite{vanLintWilson2001,harary1973graphical}}) and its ancilla-assisted counterpart can be computed from Eqs.~(\ref{burnside}) and~(\ref{N_a polya}):
\begin{equation}
N_{\rm c}
=
{1\over 2n}
\left[\,
\sum_{k=0}^{n-1} d^{\gcd(k,n)}
+
f(n)\,
\right],
\end{equation}
with
\begin{equation}
f(n)=
\begin{cases}
{n\over2}d^{{n\over2}}(d+1), & n~\mbox{even},\\[.4em]
n d^{{n+1\over2}}, & n~\mbox{odd}.
\end{cases}
\end{equation}
The contribution $f(n)$ arises from the reflections, while the sum accounts for the cyclic subgroup $\langle r\rangle\trianglelefteq D_n$, as in Eq.~(\ref{Nc&Na}). The corresponding ancilla-assisted result, $N_{\rm a}$, is obtained by replacing $d$ with $d^2$. For asymptotically large $n$, the contribution of $f(n)$ is subdominant, and
\begin{equation}
N_{\rm c}\sim {d^n\over 2n};
\qquad
N_{\rm a}\sim {d^{2n}\over 2n}.
\label{Dn asy}
\end{equation}

Before deriving the corresponding ancilla-free expression, we reformulate Eq.~(\ref{M_q}) for an important family of groups that includes $D_n$ and the symmetric group $S_n$.

\noindent{\em Message counting within ancilla-free quantum protocols for totally orthogonal groups.} Except for special cases such as cyclic groups, the computation of $N_{\rm q}$ becomes much more involved. However, if {\em all\,} irreps of $\Gc$ are realizable over the real numbers, one obtains Pólya-like formulas analogous to Eqs.~(\ref{burnside}) and~(\ref{N_a polya}):
\begin{equation}
N_{\rm q}={1\over|\Gc|}\sum_{\sigma\in\Gc}d^{c(\sigma^2)}.
\label{Nq polya}
\end{equation}
Such groups, known as totally orthogonal, include many physically relevant examples, notably the dihedral group $D_n$ and the symmetric group $S_n$.

The proof of Eq.~(\ref{Nq polya}) can be found in~\cite{supp_mat}.~It~exploits the fact that, for every irrep $\mu$, the Frobenius--Schur indicator of totally orthogonal groups is equal to $+1$~\cite{james2001representations}:
\mbox{$\nu_\mu:=|\Gc|^{-1}\sum_{\sigma\in\Gc} \chi_\mu(\sigma^2)=1$},
where $\chi_\mu(\sigma)$ is the character of the irrep $\mu$ of $\Gc$.

For the dihedral group, Eq.~(\ref{Nq polya}) gives
\begin{equation}
N_{\rm q}={d^n\over2}+{1\over2n}\sum_{k=0}^{n-1}d^{\gcd(b_n k,n)},
\end{equation}
where $b_n=1$ for odd $n$ and $b_n=2$ for even $n$. The first term arises from reflections, since $(sr^k)^2=e$, whose cycle decomposition consists of $n$ $1$-cycles. The second term comes from rotations, for which $\sigma^2=r^{2k}$. Asymptotically, the reflection contribution dominates, and therefore
\begin{equation}
N_{\rm q}\sim {d^n\over 2}.
\end{equation}
Comparing with Eq.~(\ref{Dn asy}), we recover the expected hierarchy \mbox{$N_{\rm c}<N_{\rm q}<N_{\rm a}$} for $n,d\ge 2$, providing a consistency check for the expressions above.

As in the cyclic case, the quantum encoding for the dihedral group can be implemented using the QFT~\cite{prep}, making this protocol experimentally accessible with current quantum technology~\cite{bluvstein2022quantum}.

\noindent{\em The symmetric group $S_n$.} As a concrete example, let us assume that messages are transmitted asynchronously, for instance due to diffusion, as in molecular communication, or because of routing processes in communication networks. We model this situation with a full permutation channel, i.e., a channel whose symmetry is described by the symmetric group $S_n$.

The number $N_{\rm c}$ of messages that can be transmitted in the classical setting can be computed from Eq.~(\ref{burnside}). However, a more direct derivation uses the fact that under arbitrary permutations only the occupation numbers of the different symbols are preserved. A straightforward stars-and-bars combinatorial argument yields
\begin{equation}
N_{\rm c}={n+d-1\choose n},
\qquad
N_{\rm a}={n+d^2-1\choose n},
\end{equation}
where the ancilla-assisted expression is obtained from the classical one by the replacement $d\mapsto d^2$. Asymptotically,
\begin{equation}
N_{\rm c}\sim {n^{d-1}\over (d-1)!},
\qquad
N_{\rm a}\sim {n^{d^2-1}\over (d^2-1)!}.
\label{Nc Na Sn}
\end{equation}
We note that the exponential scaling found for cyclic and dihedral symmetries is lost under such a demanding channel, becoming merely polynomial. Even ancilla-assisted encoding cannot restore the exponential behavior, although it still provides a significant improvement over the classical setting.

Without ancillas, the number of perfectly distinguishable messages is given by Eq.~(\ref{Nq polya}). The value of $c(\sigma^2)$ depends on the parity of the cycle lengths: a $k$-cycle of $\sigma$ remains a $k$-cycle under squaring when $k$ is odd, whereas for even $k$ it splits into two cycles of length $k/2$. Therefore,
\begin{equation}
N_{\rm q}={1\over n!}\sum_{\sigma\in S_n}\;
\prod_{k~{\rm odd}} d^{c_k(\sigma)}
\prod_{k~{\rm even}} d^{2c_k(\sigma)},
\end{equation}
where $c_k(\sigma)$ is the number of $k$-cycles in the decomposition of $\sigma$. This expression can be written equivalently as
\mbox{$
N_{\rm q}=Z_{S_n}(d,d^2,d,d^2,\dots),
$}
where
\begin{equation}
Z_{S_n}(a_1,a_2,\dots,a_n):=
{1\over n!}
\sum_{\sigma\in S_n}
\prod_{k=1}^n a_k^{c_k(\sigma)}
\end{equation}
is the cycle index function of $S_n$~\cite{polya2012combinatorial} in the independent variables $a_1,a_2,\dots,a_n$. Its generating function, commonly known as the cycle index series, has the compact form
\begin{equation}
\sum_{n=0}^\infty x^n Z_{S_n}(a_1,a_2,\dots,a_n)
=
\exp\!\left(\,\sum_{k=1}^\infty {a_k x^k\over k}\right).
\end{equation}
For the case of interest here, $a_{2j+1}=d$ and $a_{2j}=d^2$, the series in the exponent can be evaluated in closed form, yielding
\begin{equation}
N_{\rm q}
=
[x^n]\!\left[
(1-x)^{-{d(d+1)\over2}}
(1+x)^{-{d(d-1)\over2}}
\right],
\end{equation}
where $[x^n]f(x)$ denotes the coefficient of $x^n$ in the Taylor expansion of $f(x)$. The asymptotic behavior is then obtained from the dominant singularity at $x=1$:
\begin{equation}
N_{\rm q}\sim
{n^{{d(d+1)\over2}-1}
\over
2^{d(d-1)\over2}
\left[{d(d+1)\over2}-1\right]!}.
\end{equation}
Together with Eq.~(\ref{Nc Na Sn}), this reveals a clear quantum advantage over the classical encoding for the most general scrambling of information carriers.

The explicit construction of the encoding states and additional details can be found in~\cite{prep}, which also addresses the transmission and storage of quantum information.

\noindent{\em Summary and outlook.} We have shown that quantum mechanics can substantially enhance zero-error communication and storage when the positional identity of the information carriers is lost. Without additional positional information, classical protocols suffer a severe reduction in the number of distinguishable messages, whereas quantum coherence and entanglement allow information to be encoded into collective states of the carriers. For cyclic rearrangements, this restores the communication and storage capabilities of an ideal ordered array. In addition, we demonstrated a clear quantum advantage for both dihedral and fully random scrambling. In the cyclic and dihedral cases, the corresponding protocols can be implemented through QFT, making them particularly appealing for near-term experimental realizations. Dense-coding-like enhancements remain possible even under positional uncertainty.

Future work should address the robustness of these protocols under realistic noise and the development of concrete proof-of-principle experimental implementations. In particular, recent advances in coherent transport of neutral-atom arrays while preserving internal quantum states suggest that these protocols may already be realizable in current programmable quantum platforms~\cite{bluvstein2022quantum}. More broadly, our results suggest a new paradigm for communication and storage in architectures where the ordering of the carriers is intrinsically uncertain or dynamically lost, including mobile quantum platforms and molecular or DNA-based information systems.

\noindent\textit{Acknowledgments}. We thank A. Ac\'{\i}n for valuable discussions and encouragement. 
This work was supported by MCIN with funding from the European Union NextGenerationEU \mbox{(PRTR-C17.I1)}, the Generalitat de Catalunya, and the Spanish MINECO-TD through the QUANTUM ENIA project ``Quantum Spain'', funded by the European Union Recovery, Transformation and Resilience Plan – NextGenerationEU within the framework of the ``Digital Spain 2026 Agenda''. Additional support was provided by grant PID2022-141283NB-I00 (MICIU/AEI/10.13039/501100011033). J.C. acknowledges support from ICREA Academia, and A.D. from MICIN through grant FPU23/02763. M.H. is supported by the National Science Foundation under the grant Collaborative Research: NeTS: Medium 2504622.



\bibliography{bibliography} 

\cleardoublepage 

\setcounter{section}{0} 
\setcounter{equation}{0} 
\setcounter{figure}{0} 
\setcounter{table}{0} 
\setcounter{page}{1}

\parskip=.7em

\title{SUPPLEMENTAL MATERIAL FOR\\
Quantum-Enhanced Zero-Error Communication and Storage under Positional Uncertainty}

\author{A. Diebra$^{1}$, D. Gonz\'alez-Lociga$^{1}$, 
M. Hillery$^{2,3}$, J. Calsamiglia$^{1}$, and E. Bagan$^{1}$}

\affiliation{
$^{1}$F\'{i}sica Te\`{o}rica: Informaci\'{o} i Fen\`{o}mens Qu\`antics, 
Universitat Aut\`{o}noma de Barcelona, 08193 Bellaterra (Barcelona), Spain
$^{2}$Department of Physics and Astronomy, Hunter College of the City University 
of New York, 695 Park Avenue, New York, NY 10065, USA\\
$^{3}$Physics Program, Graduate Center of the City University of New York, 
365 Fifth Avenue, New York, New York 10016, USA
}
\begin{abstract}

\end{abstract}

\pacs{03.67.-a, 03.65.Ta,42.50.-p }
\maketitle 
\onecolumngrid

All equations in this supplementary note are numbered with the prefix `S'.
Equations referenced without this prefix correspond to those in the main
text. The sections are organized according to the order in which they are
used in the main text, rather than thematically.

\section{Group actions, the orbit--stabilizer theorem, and Burnside's lemma}

The material in this section is standard and can be found in textbooks on
combinatorics~\cite{vanLintWilson2001,polya2012combinatorial} and group
theory~\cite{Serre1977,james2001representations}. For the reader's convenience, we summarize the main
tools used to derive the number of messages that can be transmitted using
only classical resources. Proofs of selected statements, such as the
orbit-stabilizer theorem and Burnside's lemma, are also provided.

Before introducing the main group-theoretical notions used throughout these
notes, we briefly recall some elementary definitions from set theory, which
also serve to establish notation and conventions.

\subsubsection{Equivalence classes and quotient sets}

Let $\sim$ be an equivalence relation on a set $\Xc$. The {\em equivalence
class} of $x\in\Xc$ is the subset
$$
[x]=\{y\in\Xc\;|\; y\sim x\}.
$$
The element $x$ is called a {\em representative} of the class $[x]$. The
choice of representative is not unique: every element $y\in[x]$ can be chosen
as a representative of $[x]$.

The set of equivalence classes of $\sim$ is called the {\em quotient set}
and is denoted by $\Xc/\sim$.
Any equivalence relation partitions the set $\Xc$ into disjoint classes.
Therefore, if $\Xc$ is finite,
$$
|\Xc|=\sum_{j=1}^N |[x_j]|,
$$
where $|Y|$ denotes the number of elements (cardinality) of a set $Y$,
$N=|\Xc/\sim|$, and $x_j$ is a representative of the $j$th equivalence class.

\subsubsection{Left and right cosets. Normal subgroups}

Let $\Hc$ be a subgroup of a group $\Gc$, which we denote by $\Hc\le \Gc$.
We define a relation on $\Gc$: $g \sim g'$ if $g' = g h$ for some
$h \in \Hc$. This is an equivalence relation.

All elements equivalent to $g$ form the {\em left coset} of $g$:
$\co g=\{g'\in\Gc : g'\sim g\}$. Hence, left cosets are a special case of
equivalence classes in which the underlying set, $\Gc$, carries a group
structure. As before, we call $g$ a representative~of the coset, although
any element of the coset can equally well serve as a representative.
The notation $\co g$ reflects the fact that its elements are obtained by
multiplying $g$ on the left by elements of $\Hc$. In particular, if $\Gc$
is finite, every left coset contains the same number of elements as $\Hc$.
As for any equivalence relation, the one we have defined partitions the
group $\Gc$ into disjoint left cosets. Therefore, if $\Gc$ is finite,
$|\Hc|$ divides $|\Gc|$, i.e.,
$$
|\Gc| / |\Hc|\in \mathbb{N}.
$$
One can similarly define right cosets of the form $\Hc g$, with analogous
properties. In general, $\Hc g \neq \co g$ unless $\Hc$ is a normal subgroup.

By definition, a subgroup $\Hc\le\Gc$ is a {\em normal subgroup}, denoted
$\Hc\trianglelefteq\Gc$, if it is invariant under conjugation, meaning that
for every $h\in\Hc$ and every $g\in\Gc$, one has $ghg^{-1}\in\Hc$.
Equivalently, for every $g\in\Gc$ and $h\in\Hc$, there exists $h'\in\Hc$
such that $gh=h'g$. In particular, for normal subgroups one has
$\Hc g=\co g$.

In general, the quotient set, denoted $\Gc/\Hc$ (instead of $\Gc/\sim$), is
not a group unless $\Hc$ is normal. In that case, one can define the product
of cosets by
$$
(\co g)(\co{g'}) = \co{(gg')}.
$$
This operation is well defined precisely because $\Hc$ is normal.

\subsubsection{Action of a group on a set}\label{action_sss}

\noindent{\em An action of a group $\Gc$ on a set $\Xc$} is a mapping
$$
\varphi:\Gc\times\Xc\rightarrow\Xc,
$$
satisfying
\begin{enumerate}[label={(\alph*)}]
\item $\varphi(e,x)=x$ for all $x\in\Xc$,
\item $\varphi(g',\varphi(g,x))=\varphi(g'g,x)$ for all
$g,g'\in\Gc$ and $x\in\Xc$.
\end{enumerate}
It is convenient to introduce the shorthand notation
$gx:=\varphi(g,x)$. The above conditions then become
\begin{enumerate}[label={(\alph*)}]
\item $ex=x$ for all $x\in\Xc$,
\item $(g'g)x=g'(gx)$ for all $g,g'\in\Gc$ and $x\in\Xc$.
\end{enumerate}
The concept of a group action is ubiquitous in Physics. Space-time
transformations, permutations of particles, and time evolution are all
examples of group actions. In these examples, as well as in many other
applications, the set $\Xc$ carries additional mathematical structure beyond
that of a simple set, such as that of a vector space, group, or algebra.

\noindent{\em The stabilizer of $x\in\Xc$} is defined as
$$
\stab(x)=\left\{g\in\Gc\;|\; gx=x\right\}.
$$
In words, it is the set of group elements that leave $x$ invariant.
It is convenient to introduce the shorthand notation
$\Gc_x:=\stab(x)$. We will use both notations interchangeably throughout
these notes. One can easily check that the stabilizer is a subgroup of
$\Gc$.

\noindent{\em The fixed-point set of $g\in\Gc$} is the subset
$\Xc_g\subset\Xc$ of elements fixed by $g$, namely
$$
\Xc_g=\{x\in\Xc\;|\; gx=x\}.
$$

\noindent{\em The $\Gc$-orbit of $x\in\Xc$} (or simply orbit of $x$ if no ambiguity arises) is defined as the set
$$
O_x=\left\{x'\in\Xc\;|\; x'=gx,\; g\in\Gc\right\},
$$
or simply
$$
O_x=\left\{gx\;|\; g\in\Gc\right\},
$$
of all elements of $\Xc$ that can be reached from $x$ by the action of
elements of the group.
The orbits $O_x$ are precisely the equivalence classes of the relation
$$
x_1\sim x_2 \;\Leftrightarrow\; x_2=gx_1
\ \text{for some } g\in\Gc.
$$
Hence,
$$
[x_1]
=
\left\{x_2\in\Xc\;|\; x_2=gx_1
\ \text{for some } g\in\Gc\right\}
=
O_{x_1},
$$
and the set of all these equivalence classes, or orbits, is denoted by
$\Xc/\Gc$. This implies that
\begin{enumerate}[label={(\alph*)}]
\item Any two orbits are either identical or disjoint.
\item For a finite set $\Xc$,
$$
|\Xc|
=
\sum_{j=1}^N |O_{x_j}|,
$$
where $O_{x_j}$, $j=1,2,\dots,N$, are the
$N=|\Xc/\Gc|$ distinct orbits of $\Xc$. The elements $x_j$ are
representatives of the corresponding orbits.
\end{enumerate}

\subsubsection{Orbit-stabilizer theorem}

\begin{theo}\label{ebc21.03.26-t1}
Let the group $\Gc$ act on the set $\Xc$. Then the mapping
$
\phi:\Gc/\Gc_x \rightarrow O_x,
$
defined by
$$
\phi(g\Gc_x)=gx,
$$
is a bijection.
\end{theo}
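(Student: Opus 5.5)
The plan is to verify directly the four properties that make $\phi$ a bijection: it is well defined, injective, surjective, and it lands in $O_x$. Here $\Gc/\Gc_x$ is the set of left cosets $g\Gc_x$, in the sense of the coset discussion above. The only input needed is that $\Gc_x$ is a subgroup of $\Gc$, which was noted when the stabilizer was defined. Everything else follows from the two axioms (a) and (b) of a group action.

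\emph{Well-definedness.} This is the step that needs the most care, because a coset can be written with many representatives. Suppose $g\Gc_x=g'\Gc_x$. Then $g'=gh$ for some $h\in\Gc_x$. By axiom (b), $g'x=(gh)x=g(hx)$, and since $h$ fixes $x$ this equals $gx$. So the value $\phi(g\Gc_x)$ does not depend on the chosen representative. The image is clearly contained in $O_x$, since $gx\in O_x$ by definition of the orbit.

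\emph{Surjectivity.} This is immediate. Every element of $O_x$ has the form $gx$ for some $g\in\Gc$, and that element is $\phi(g\Gc_x)$.

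\emph{Injectivity.} I reverse the well-definedness argument. Suppose $gx=g'x$ and act with $g^{-1}$. Axioms (a) and (b) give $x=(g^{-1}g)x=g^{-1}(g'x)=(g^{-1}g')x$. Hence $h:=g^{-1}g'$ lies in $\Gc_x$. Then $g'=gh$, so $g'\sim g$ in the sense of the coset relation, and therefore $g'\Gc_x=g\Gc_x$.

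This completes the proof. For finite $\Gc$, I will also note the standard corollary used in the main text: every left coset has $|\Gc_x|$ elements and the cosets partition $\Gc$. Combined with the bijection, this gives $|O_x|=|\Gc|/|\Gc_x|$, and in particular $|O_x|$ divides $|\Gc|$.
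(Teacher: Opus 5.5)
Your proposal is correct and follows the paper's proof essentially step for step: you establish well-definedness via $g'=gh$ with $h\in\Gc_x$, injectivity via $g^{-1}g'\in\Gc_x$, and surjectivity directly from the definition of $O_x$. Your closing remark on $|O_x|=|\Gc|/|\Gc_x|$ likewise matches the corollary the paper derives from this bijection.
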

Here, $g\Gc_x=\{gk\;|\; k\in\Gc_x\}$ denotes a left coset.
The quotient set $\Gc/\Gc_x$ is not necessarily a group, since $\Gc_x$
need not be a normal subgroup of $\Gc$.

\noindent{\em Proof of Theorem~\ref{ebc21.03.26-t1}.}
\begin{enumerate}[label={(\alph*)}]
\item The mapping $\phi$ is well defined: if $g'\in g\Gc_x$ (that is,
if $g'$ is another representative of the same coset $g\Gc_x$), then
$g'x=gx$. Indeed, since $g'=gk$ for some $k\in\Gc_x$,
$$
g'x=(gk)x=g(kx)=gx.
$$

\item $\phi$ is injective. If $\phi(g\Gc_x)=\phi(g'\Gc_x)$, then
$$
g'x=gx
\;\Rightarrow\;
(g^{-1}g')x=x
\;\Rightarrow\;
g^{-1}g'\in\Gc_x.
$$
Hence $g'=gk$ for some $k\in\Gc_x$, and therefore
$g\Gc_x=g'\Gc_x$.

\item $\phi$ is surjective. If $x'\in O_x$, then there exists
$g\in\Gc$ such that $x'=gx$. Thus,
$$
\phi(g\Gc_x)=gx=x'.
$$
\hfill$\blacksquare$
\end{enumerate}
\begin{cor}[Orbit-stabilizer theorem]\label{ebc21.04.26-c1}
If the finite group $\Gc$ acts on the finite set $\Xc$, then for any
$x\in\Xc$,
$$
|\Gc|=|\Gc_x||O_x|.
$$
Hence $|O_x|$ divides $|\Gc|$, denoted by
$|O_x| \mid |\Gc|$.
\end{cor}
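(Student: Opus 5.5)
The plan is to deduce the corollary directly from Theorem~\ref{ebc21.03.26-t1}, which gives a bijection $\phi:\Gc/\Gc_x\to O_x$. Since $\Gc$ and $\Xc$ are finite, a bijection between finite sets forces equal cardinalities, so $|O_x|=|\Gc/\Gc_x|$. The remaining task is to compute the number of left cosets of $\Gc_x$ in $\Gc$.

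The first step is to recall that $\Gc_x$ is a subgroup of $\Gc$, as noted in Section~\ref{action_sss}. Closure holds because $k,k'\in\Gc_x$ give $(kk')x=k(k'x)=kx=x$. The identity lies in $\Gc_x$ by axiom (a). Inverses lie in $\Gc_x$ because $k^{-1}x=k^{-1}(kx)=(k^{-1}k)x=ex=x$.

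The second step is to invoke the coset discussion from the subsection on left and right cosets. The left cosets $g\Gc_x$ partition $\Gc$, and each has exactly $|\Gc_x|$ elements, since $k\mapsto gk$ is injective by cancellation. It follows that
$$
|\Gc|=|\Gc/\Gc_x|\,|\Gc_x|.
$$
Substituting $|\Gc/\Gc_x|=|O_x|$ then gives $|\Gc|=|\Gc_x||O_x|$. Divisibility follows at once, because $|\Gc|/|O_x|=|\Gc_x|$ is a positive integer.

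I do not expect a real obstacle here. The only point needing care is to count left cosets, matching the convention used in Theorem~\ref{ebc21.03.26-t1}. The well-definedness of the map is already part of that theorem, so the argument relies only on it and on the Lagrange-type coset counting stated earlier.
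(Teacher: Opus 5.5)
Your proposal is correct and follows essentially the same route as the paper: the bijection of Theorem~\ref{ebc21.03.26-t1} gives $|O_x|=|\Gc/\Gc_x|$, and the Lagrange-type coset count gives $|\Gc/\Gc_x|=|\Gc|/|\Gc_x|$. You additionally spell out the subgroup check for $\Gc_x$ and why each left coset has $|\Gc_x|$ elements, which the paper's two-line argument leaves implicit.
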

Indeed, for fixed $x\in\Xc$, Theorem~\ref{ebc21.03.26-t1} implies that
$|\Gc/\Gc_x|=|O_x|$, while
$$
|\Gc/\Gc_x|={|\Gc|\over |\Gc_x|}.
$$

\subsubsection{Burnside lemma}

\begin{lem}\label{ebc21.03.26-l1}
Let $\Gc$ be a group acting on a set $\Xc$. The stabilizers of all elements
belonging to the same orbit have the same order (that is, the same number of
elements if $\Gc$ is finite). In other words, if $x_2\in O_{x_1}$, then
$$
|\Gc_{x_1}|=|\Gc_{x_2}|.
$$
\end{lem}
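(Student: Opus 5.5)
There are two natural routes. The primary one shows that the stabilizers of two points in the same orbit are conjugate subgroups. If $x_2\in O_{x_1}$, choose $g\in\Gc$ with $x_2=gx_1$. I would then prove
$$
\Gc_{x_2}=g\,\Gc_{x_1}\,g^{-1}.
$$
The equality of orders follows immediately, because conjugation $k\mapsto gkg^{-1}$ is a bijection of $\Gc$ onto itself, with inverse $k\mapsto g^{-1}kg$. It therefore maps $\Gc_{x_1}$ bijectively onto its image. This argument works even when $\Gc$ is infinite, where "same order" means equal cardinality.

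The key steps are the two inclusions, and both use only the action axioms (a) and (b) from Sec.~\ref{action_sss}.

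\emph{First inclusion.} If $k\in\Gc_{x_1}$, then
$$
(gkg^{-1})x_2=(gkg^{-1})(gx_1)=gk(g^{-1}g)x_1=g(kx_1)=gx_1=x_2,
$$
so $gkg^{-1}\in\Gc_{x_2}$.

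\emph{Second inclusion.} If $h\in\Gc_{x_2}$, I would set $k=g^{-1}hg$. The same computation gives $kx_1=g^{-1}h(gx_1)=g^{-1}(hx_2)=g^{-1}x_2=x_1$. Hence $k\in\Gc_{x_1}$, and $h=gkg^{-1}$ lies in $g\Gc_{x_1}g^{-1}$.

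For finite $\Gc$, which is the case relevant to Burnside's lemma, there is an even shorter alternative. Corollary~\ref{ebc21.04.26-c1} gives $|\Gc_{x_i}|=|\Gc|/|O_{x_i}|$ for $i=1,2$. Since $x_2\in O_{x_1}$ and orbits are equivalence classes, $O_{x_2}=O_{x_1}$, so the two stabilizer orders coincide.

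I would present the conjugation argument as the main proof, since it is more general, and mention the finite-case shortcut as a remark. There is no real obstacle in either route. The only point needing care is the bookkeeping with axiom (b), namely $(g'g)x=g'(gx)$, when moving $g^{-1}$ past the action.
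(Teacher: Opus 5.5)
Your proof is correct and is essentially the paper's argument: the paper defines $\psi(h)=ghg^{-1}$ from $\Gc_{x_1}$ to $\Gc_{x_2}$ and checks that it is well defined, injective (trivial kernel, after noting $\psi$ is a homomorphism) and surjective via $h=g^{-1}kg$, which is exactly your two inclusions combined with bijectivity of conjugation. Your finite-case shortcut via the orbit--stabilizer corollary is also valid and not circular, since that corollary is proved before and independently of this lemma.
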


\noindent{\em Proof.} If $x_2\in O_{x_1}$, there exists $g\in\Gc$ such that $x_2=g x_1$.  
Assume $h\in\Gc_{x_1}$. Then $g h g^{-1}\in\Gc_{x_2}$, since
$$
(g h g^{-1})x_2=g h g^{-1}(gx_1)=gh(g^{-1}g)x_1=
(gh)x_1=g(hx_1)=gx_1=x_2,
$$
i.e., $ghg^{-1}$ fixes $x_2$.  
This naturally leads us to define a map
$
\psi : \Gc_{x_1}\rightarrow\Gc_{x_2}
$
as
$$
\psi(h)=ghg^{-1}\in\Gc_{x_2}\quad\mbox{for all $h\in\Gc_{x_1}$.}
$$
We now prove that $\psi$ is a bijective group homomorphism, i.e., a map that
preserves the group operation:
\begin{enumerate}[label={(\alph*)}]
\item $\psi$ is a homomorphism. Indeed,  if $h,k\in\Gc_{x_1}$,
$$
\psi(hk)=g(hk)g^{-1}=gh(g^{-1}g)kg^{-1}=
(ghg^{-1})(gkg^{-1})=\psi(h)\psi(k).
$$

\item $\psi$ is injective. Since $\psi$ is a group homomorphism, it suffices to prove that $\ker\psi=\{e\}$. Indeed, if $h\in\ker\psi$, then
$$
\psi(h)=e\quad\Rightarrow\quad ghg^{-1}=e
\quad\Rightarrow\quad h=e.
$$

\item $\psi$ is surjective. Take any $k\in\Gc_{x_2}$, and consider $h=g^{-1}k g$. Then $h\in\Gc_{x_1}$, since
$$
hx_1=g^{-1}k g (g^{-1}x_2)=g^{-1}x_2=x_1,
$$
and
$$
\psi(h)=g(g^{-1}kg)g^{-1}=(gg^{-1})k(gg^{-1})=k.
$$
\end{enumerate}

Since $\psi$ is a bijection, we have $|\Gc_{x_1}|=|\Gc_{x_2}|$.\hfill$\blacksquare$

\begin{theo}[Burnside]\label{ebc29.04.26-th1}
Let $\Gc$ be a finite group acting on a finite set $\Xc$. Then the number
of distinct orbits is given~by
$$
N=\left|\Xc/\Gc\right|
=
{1\over|\Gc|}
\sum_{g\in\Gc} |\Xc_g|.
$$
\end{theo}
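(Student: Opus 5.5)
The plan is the standard double-counting argument on the set of fixed pairs
$$
F=\{(g,x)\in\Gc\times\Xc \;|\; gx=x\}.
$$
We count $|F|$ in two ways.

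First we fix $g$ and sum over the elements it fixes, which gives $|F|=\sum_{g\in\Gc}|\Xc_g|$. Then we fix $x$ and sum over the group elements that fix it, which gives $|F|=\sum_{x\in\Xc}|\Gc_x|$. Both sums are finite because $\Gc$ and $\Xc$ are finite, so the two expressions are equal. This reduces the theorem to showing that $\sum_{x\in\Xc}|\Gc_x|=N|\Gc|$.

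To show this, we use the partition of $\Xc$ into the $N$ disjoint orbits $O_{x_1},\dots,O_{x_N}$ from Sec.~\ref{action_sss} and split the sum orbit by orbit:
$$
\sum_{x\in\Xc}|\Gc_x|=\sum_{j=1}^N\sum_{x\in O_{x_j}}|\Gc_x|.
$$
By Lemma~\ref{ebc21.03.26-l1}, every $x\in O_{x_j}$ has $|\Gc_x|=|\Gc_{x_j}|$. The inner sum is therefore $|O_{x_j}|\,|\Gc_{x_j}|$. Corollary~\ref{ebc21.04.26-c1} (orbit--stabilizer) says this product equals $|\Gc|$. Each orbit thus contributes exactly $|\Gc|$, and the total is $N|\Gc|$.

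Equating the two counts gives $N|\Gc|=\sum_{g}|\Xc_g|$. Dividing by $|\Gc|$ yields the claimed formula.

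None of these steps is a real obstacle. The only point needing care is the regrouping of the sum over $\Xc$ into orbits, which relies on orbits being disjoint and covering $\Xc$, both already established. Alternatively, one could skip Lemma~\ref{ebc21.03.26-l1} and write $|\Gc_x|=|\Gc|/|O_x|$ directly from the corollary, so that $\sum_{x\in O_{x_j}}1/|O_x|=1$ for each orbit.
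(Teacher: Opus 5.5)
Your proof is correct and is essentially the paper's proof: the paper writes both cardinalities with Kronecker deltas to get $\sum_{g\in\Gc}|\Xc_g|=\sum_{x\in\Xc}|\Gc_x|$, splits the sum over the disjoint orbits, and uses the equal-stabilizer lemma together with the orbit--stabilizer corollary to show that each orbit contributes $|\Gc|$. Your closing variant, using $|\Gc_x|=|\Gc|/|O_x|$ with $O_x=O_{x_j}$ for $x\in O_{x_j}$, is also valid and simply bypasses the lemma.
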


\noindent{\em Proof.}
Let $x_j$, $j=1,2,\dots,N$, be representatives of the $N$ distinct orbits.
Then,
$$
\sum_{g\in\Gc}|\Xc_g|
=\sum_{g\in\Gc}\sum_{x\in \Xc} \delta_{gx,x}
=\sum_{x\in\Xc}|\Gc_x|
=\sum_{j=1}^N\sum_{x\in O_{x_j}}|\Gc_x|
=\sum_{j=1}^N |O_{x_j}||\Gc_{x_j}|
=\sum_{j=1}^N|\Gc|
=N|\Gc|.
$$
In the second equality, the order of summation over $g\in\Gc$ and $x\in\Xc$
is exchanged, and the Kronecker delta is used to express the cardinalities
of the fixed-point set and stabilizer as
$|\Xc_g|=\sum_{x\in\Xc}\delta_{gx,x}$ and
$|\Gc_x|=\sum_{g\in\Gc}\delta_{gx,x}$.
The third equality follows from the fact that distinct orbits partition
$\Xc$, the fourth from Lemma~\ref{ebc21.03.26-l1}, and the fifth from~Corollary~\ref{ebc21.04.26-c1}. \hfill$\blacksquare$


\section{Conjugation and conjugacy classes}

\subsection{Conjugation in arbitrary groups}

Let $\Gc$ act on itself by {\em conjugation}:
$$
\varphi(g,h)=ghg^{-1}.
$$
In this section, we avoid the shorthand notation $gh:=ghg^{-1}$
introduced in Sec.~\ref{action_sss}, since it would be misleading, or even
inconsistent, in the present context. Although both $g$ and $h$ are group
elements, they play different roles in this expression: $g$ labels the group
element generating the action, whereas $h$ denotes the point in the domain on
which the action acts.

Conjugation defines a group action since:
\begin{enumerate}[label={(\alph*)}]
\item $\varphi(e,h)=ehe^{-1}=h$, for all $h\in\Gc$.

\item $\varphi(g_1,\varphi(g_2,h))
=\varphi(g_1,g_2hg_2^{-1})
=g_1(g_2hg_2^{-1})g_1^{-1}
=(g_1g_2)h(g_1g_2)^{-1}
=\varphi(g_1g_2,h)$,
for all $g_1,g_2,h\in\Gc$.
\end{enumerate}

The orbits $O_h$, $h\in\Gc$, under conjugation are called
{\em conjugacy classes} and are denoted $C_h$ instead of $O_h$.
Theorem~\ref{ebc29.04.26-th1} relates their sizes to the stabilizers
$\stab(h)=\Gc_h$, known in this context as {\em centralizers},
$$
\cen(h)=\{g\in\Gc: gh=hg\}
=\{g\in\Gc : \varphi(g,h)=h\}
=\Gc_h.
$$
In particular, $\cen(h)$ is a subgroup of $\Gc$. Therefore,
\begin{equation}
|C_h|
=
{|\Gc|\over |\cen(h)|}.
\label{|C_x|}
\end{equation}

\subsection{\boldmath Conjugacy classes in $S_n$}

The {\em symmetric group}~$S_n$ is the group of all permutations of~$n$
objects. Its elements can be represented as bijections
$\sigma:\{0,1,2,\dots,n-1\}\to\{0,1,2,\dots,n-1\}$, with group operation
given by composition. Applying first~$\sigma$ and then~$\tau$ to
$x\in\{0,1,\dots,n-1\}$ yields the permutation~$\tau\sigma$, defined by
$(\tau\sigma)(x)=\tau(\sigma(x))$. The order of~$S_n$ is
$|S_n|=n!$.

\subsubsection{Cycle and disjoint-cycle structure of $S_n$}

A {\em cycle} $(a_0,a_1,\dots,a_{k-1})$, with
$a_j\in\{0,1,\dots,n-1\}$, permutes the elements cyclically:
$$
\mbox{$a_0\to a_1$, $a_1\to a_2$, $\dots$,
$a_{k-1}\to a_0$}, 
$$
leaving all other elements fixed. More precisely,
its action is
$$
(a_0,a_1,\dots,a_{k-1})(x)=
\begin{cases}
a_{j+1\!\!\!\pmod{k}}
&
\mbox{if $x=a_j$ for some $j\in\{0,1,\dots,k-1\}$,}
\\
x
&
\mbox{if $x\notin\{a_0,a_1,\dots,a_{k-1}\}$.}
\end{cases}
$$
\begin{prop}
Any permutation can be written uniquely (up to order) as a product of {\em disjoint cycles}, i.e., cycles with no elements in common.
\end{prop}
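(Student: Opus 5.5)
We will prove existence and uniqueness separately. Both rest on the action of the cyclic subgroup $\langle\sigma\rangle\le S_n$ on the set $\{0,1,\dots,n-1\}$, so the machinery of Sec.~\ref{action_sss} applies directly.

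\emph{Existence.} Let $\sigma\in S_n$. The orbits of $\langle\sigma\rangle$ are equivalence classes, so they partition $\{0,\dots,n-1\}$ into disjoint subsets. Fix one orbit $O_a$ and let $k$ be the smallest positive integer with $\sigma^k(a)=a$; such a $k$ exists because $\{0,\dots,n-1\}$ is finite. The elements $a,\sigma(a),\dots,\sigma^{k-1}(a)$ are pairwise distinct: if $\sigma^i(a)=\sigma^j(a)$ with $0\le i<j<k$, then $\sigma^{j-i}(a)=a$, contradicting the minimality of $k$. Any power $\sigma^m(a)$ reduces to $\sigma^{m\bmod k}(a)$, so these $k$ elements are exactly $O_a$, which by the orbit--stabilizer theorem has size $k=|\langle\sigma\rangle|/|\stab(a)|$. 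Define the cycle $\gamma_a=(a,\sigma(a),\dots,\sigma^{k-1}(a))$. Since the orbits are disjoint, the cycles $\gamma_a$ for different orbit representatives are disjoint, and hence commute. To show that their product equals $\sigma$, evaluate both sides on an arbitrary $x$. The point $x$ lies in exactly one orbit $O_a$. Every cycle $\gamma_b$ with $b\neq a$ fixes $x$, while $\gamma_a(x)=\sigma(x)$ by construction, including the wrap-around $\sigma(\sigma^{k-1}(a))=\sigma^k(a)=a$. One-cycles, which correspond to fixed points, may be included or omitted by convention.

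\emph{Uniqueness.} Suppose $\sigma=\tau_1\cdots\tau_m$ with the $\tau_i$ disjoint cycles of length at least 2 (or, if the convention includes them, with one-cycles included). Take a point $x$ moved by $\tau_i$. Because the cycles are disjoint, every other $\tau_j$ fixes both $x$ and everything in the support of $\tau_i$. By induction, $\sigma^l(x)=\tau_i^l(x)$ for all $l$. So the support of $\tau_i$ is the $\langle\sigma\rangle$-orbit of $x$, and the cyclic order of $\tau_i$ is forced: $\tau_i=(x,\sigma(x),\dots)=\gamma_x$. Every cycle in any such decomposition is therefore one of the $\gamma_a$. Conversely, every nontrivial orbit must be covered, since otherwise $\sigma$ would fix one of its points. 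The set of cycles is thus determined by $\sigma$, and only their order can vary.

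The main obstacle is bookkeeping rather than any deep idea. First, one must state clearly the convention about one-cycles, because uniqueness holds only within a fixed convention. Second, one must note that a cycle written as $(a_0,\dots,a_{k-1})$ is the same cycle as any cyclic rotation of it, so ``unique'' means unique as a permutation, not as a written expression. Finally, the distinctness argument for the elements $\sigma^i(a)$ is the one step that needs care, and the minimality of $k$ handles it.
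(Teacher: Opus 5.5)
Your proof is correct, and its existence part follows the paper's own argument: iterate $\sigma$ from a point, take the minimal return time to obtain a cycle, and use the disjointness of the $\langle\sigma\rangle$-orbits to repeat the construction on points not yet covered. The paper's proof stops at existence and never checks that the product of these cycles equals $\sigma$ or addresses uniqueness, so your pointwise verification and your uniqueness argument fill in what it leaves implicit; in that argument, each cycle agrees with $\sigma$ on its support and hence equals some $\gamma_a$, and every nontrivial orbit must occur.
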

\noindent{\em Proof:}
Let $\sigma \in S_n$. Consider the sequence
$$
0,\, \sigma(0),\, \sigma^2(0),\, \ldots
$$
Since $\{0,1,\ldots,n-1\}$ is finite, there exist $i<j$ such that $\sigma^i(0)=\sigma^j(0)$. Applying $\sigma^{-i}$, we obtain $\sigma^{j-i}(0)=0$. Let $r$ be the smallest positive integer such that $\sigma^r(0)=0$. Then
$$
(0,\, \sigma(0),\, \ldots,\, \sigma^{r-1}(0))
$$
is a cycle.

Now choose $a \in \{0,1,\ldots,n-1\}$ not belonging to this cycle and repeat the construction:
$$
a,\sigma(a),\sigma^2(a),\,\dots
$$
Since orbits are disjoint, so is the resulting cycle $(a,\sigma(a),\,\dots,\sigma^{s-1}(a))$. Since the set is finite, the process terminates after finitely many steps. Thus $\sigma$ can be written as a product of disjoint cycles.\hfill $\blacksquare$

\noindent Hereafter, we will assume that $\sigma$ is composed of $m_1$
cycles of length $\lambda_1$, $m_2$ cycles of length $\lambda_2$, and so on.
This cycle structure can be represented graphically by a Young diagram: a
left-justified arrangement of boxes consisting of $m_p$ rows of length
$\lambda_p$, for each $p$.

\subsubsection{Characterizing  the conjugacy classes of $S_n$}

We first recall the following elementary fact. 
\begin{prop}
Let $\pi\in S_n$ and let $\tau=(a_0,a_1,\dots,a_{k-1})$, $k\le n$. Then
\begin{equation}
\pi\tau\pi^{-1}
=
\left(\pi(a_0),\pi(a_1),\dots,\pi(a_{k-1})\right).
\label{ebc29.04.26-1}
\end{equation}
\end{prop}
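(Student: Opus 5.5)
The plan is to show that both sides of the claimed identity define the same map on $\{0,1,\dots,n-1\}$, by evaluating them pointwise. Since a permutation is determined by its values, equality of the two bijections follows. The argument only uses the explicit description of the action of a cycle given above and the fact that $\pi$ is a bijection.

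The first case is a point in the image of the cycle's support, $x=\pi(a_j)$ for some $j$. Then $\pi^{-1}(x)=a_j$, so $\tau\pi^{-1}(x)=a_{j+1\pmod k}$, and therefore
$$
\pi\tau\pi^{-1}(x)=\pi(a_{j+1\pmod k}).
$$
This is exactly how the cycle $(\pi(a_0),\pi(a_1),\dots,\pi(a_{k-1}))$ acts on $\pi(a_j)$. Before comparing, I will check that this right-hand side is a genuine $k$-cycle. Its entries $\pi(a_0),\dots,\pi(a_{k-1})$ are pairwise distinct because $\pi$ is injective and the $a_j$ are distinct.

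The second case is a point outside that image, $x\notin\{\pi(a_0),\dots,\pi(a_{k-1})\}$. Then $\pi^{-1}(x)\notin\{a_0,\dots,a_{k-1}\}$, so $\tau$ fixes $\pi^{-1}(x)$ and $\pi\tau\pi^{-1}(x)=x$. The cycle on the right-hand side also fixes $x$, by definition.

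The two cases exhaust $\{0,\dots,n-1\}$, so the identity holds. No step is a real obstacle; the only point needing care is the composition convention $(\tau\sigma)(x)=\tau(\sigma(x))$ fixed earlier, which is what makes the conjugate $\pi\tau\pi^{-1}$ relabel the entries by $\pi$ rather than $\pi^{-1}$. I will also note that the result extends immediately to products of disjoint cycles. Since conjugation is a homomorphism, $\pi(\tau_1\tau_2)\pi^{-1}=(\pi\tau_1\pi^{-1})(\pi\tau_2\pi^{-1})$. The images of disjoint cycles stay disjoint under the bijection $\pi$, so conjugation preserves cycle type. This is what the subsequent characterization of conjugacy classes of $S_n$ will use.
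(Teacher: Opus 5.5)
Your proof is correct and follows the paper's argument essentially verbatim: both evaluate $\pi\tau\pi^{-1}$ pointwise and split into the two cases $x=\pi(a_j)$ and $x\notin\{\pi(a_0),\dots,\pi(a_{k-1})\}$. Your extra checks are small additions the paper leaves implicit: that the entries $\pi(a_j)$ are distinct, so the right-hand side is a genuine $k$-cycle, and that the composition convention is what makes the relabeling go by $\pi$ rather than $\pi^{-1}$.
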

\noindent{\em Proof:} Let $x\in\{0,1,\cdots,n-1\}$. We need to prove that
$$
(\pi\tau\pi^{-1})(x)=\left(\pi(a_0),\pi(a_1),\dots,\pi(a_{k-1})\right)(x).
$$
\begin{enumerate}[label={(\alph*)}]
\item Assume first that
$x\notin\{\pi(a_0),\dots,\pi(a_{k-1})\}$, so that
$(\pi(a_0),\pi(a_1),\dots,\pi(a_{k-1}))(x)=x$.
Under this assumption,
$\pi^{-1}(x)\notin\{a_0,\dots,a_{k-1}\}$, and hence
$$
\tau(\pi^{-1}(x))=\pi^{-1}(x),
$$
so
$$
(\pi\tau\pi^{-1})(x)
=
\pi(\tau(\pi^{-1}(x)))
=
\pi(\pi^{-1}(x))
=
x.
$$

\item If $x=\pi(a_j)$ for some $j$, then $\pi^{-1}(x)=a_j$, and
$$
(\pi\tau\pi^{-1})(x)
=
\pi(\tau(a_j))
=
\pi(a_{j+1}),
$$
where indices are understood modulo $k$. This coincides with
$(\pi(a_0),\dots,\pi(a_{k-1}))(x)$. \hfill$\blacksquare$
\end{enumerate}

It follows that if a permutation has the form $\tau_1\tau_2\cdots \tau_s$, where the $\tau_i$ are disjoint cycles, then
\begin{align}
\pi \tau_1\tau_2\cdots \tau_s \pi^{-1}&
= (\pi \tau_1\pi^{-1})(\pi\tau_2\pi^{-1})\cdots (\pi\tau_s \pi^{-1})\nonumber\\
&= \tau'_1\tau'_2\cdots \tau'_s,
\label{ebc30.04.26-4}
\end{align}
where each $\tau'_j$ is a cycle of the same length as $\tau_j$ (up to relabeling). Therefore, {\em conjugation preserves the cycle structure of a permutation}.

The converse is also true: {\em if two permutations have the same cycle structure, they are conjugate to one another}. To prove this statement, it suffices to consider two cycles of the same length $(a_1,a_2,\dots,a_k)$ and $(b_1,b_2,\dots,b_k)$, and define $\pi$ by $\pi(a_i)=b_i$. Then Eq.~(\ref{ebc29.04.26-1}) shows that $\pi\tau\pi^{-1}=\tau'$. Since permutations decompose into disjoint cycles and conjugation acts multiplicatively [cf.~Eq.~(\ref{ebc30.04.26-4})], the claim follows.

The above shows that 
\begin{prop}
Conjugacy classes of $S_n$ are uniquely determined by a partition
(equivalently, a Young diagram)
$\lambda=[\lambda_1^{m_1},\lambda_2^{m_2},\dots,\lambda_s^{m_s}]$,
where $\lambda_1>\lambda_2>\cdots>\lambda_s$ and
$\sum_p m_p\lambda_p=n$. We denote this by $\lambda\vdash n$.
\end{prop}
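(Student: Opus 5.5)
The plan is to combine the two facts just established: conjugation preserves cycle structure, and permutations with the same cycle structure are conjugate. Together they give a well-defined bijection between the conjugacy classes of $S_n$ and partitions $\lambda\vdash n$. Concretely, define the map $\Lambda: S_n\to\{\lambda\vdash n\}$ sending $\sigma$ to the multiset of lengths of the cycles in its disjoint cycle decomposition, fixed points included as $1$-cycles. Listing the distinct lengths in decreasing order $\lambda_1>\dots>\lambda_s$ with multiplicities $m_p$ gives the partition. By the uniqueness part of the disjoint-cycle Proposition, this multiset does not depend on how the decomposition is written, so $\Lambda$ is well defined. The constraint $\sum_p m_p\lambda_p=n$ holds because the cycles, including the $1$-cycles, partition $\{0,\dots,n-1\}$.

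Next I will show that $\Lambda$ is constant on conjugacy classes. This is Eq.~(\ref{ebc30.04.26-4}) together with Eq.~(\ref{ebc29.04.26-1}): if $\sigma=\tau_1\cdots\tau_s$, then $\pi\sigma\pi^{-1}=\tau_1'\cdots\tau_s'$, with each $\tau_j'$ of the same length as $\tau_j$. The $\tau_j'$ are again pairwise disjoint because $\pi$ is a bijection, since $\tau_j'$ is supported on $\pi(\mathrm{supp}\,\tau_j)$. This disjointness is the point that needs a sentence, because it is what lets us read off the cycle type of $\pi\sigma\pi^{-1}$ from the $\tau_j'$ via the uniqueness statement. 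Hence $\Lambda$ descends to a map on the set of classes $S_n/\!\sim$.

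Injectivity on classes is the converse direction. Suppose $\Lambda(\sigma)=\Lambda(\sigma')$. Match the cycles of $\sigma$ and $\sigma'$ of equal length in pairs, $(a_1,\dots,a_k)\leftrightarrow(b_1,\dots,b_k)$, and define $\pi(a_i)=b_i$ on all cycles at once, including the fixed points. The main care point is that this $\pi$ is a genuine element of $S_n$. It is, because both sets of cycles partition $\{0,\dots,n-1\}$, so $\pi$ is a bijection. Then Eq.~(\ref{ebc29.04.26-1}) applied cycle by cycle, together with multiplicativity of conjugation, gives $\pi\sigma\pi^{-1}=\sigma'$.

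Surjectivity is by direct construction. Given $\lambda\vdash n$, cut $0,1,\dots,n-1$ into consecutive blocks of sizes $\lambda_1$ ($m_1$ times), then $\lambda_2$ ($m_2$ times), and so on. Take $\sigma$ to be the product of the cycles formed by these blocks in increasing order; then $\Lambda(\sigma)=\lambda$. Since $\Lambda$ is constant on classes, injective on classes and surjective, conjugacy classes are in bijection with partitions of $n$ in the stated form. I expect no real obstacle; the only subtle points are the bookkeeping that $\pi$ is a bijection and that conjugated cycles stay disjoint.
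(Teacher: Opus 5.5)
Your proposal is correct and follows essentially the same route as the paper, which obtains the proposition directly from the two preceding facts: conjugation preserves cycle structure (conjugating each disjoint cycle by $\pi$ relabels its entries), and permutations of equal cycle type are conjugate via $\pi(a_i)=b_i$. Your extra bookkeeping (disjointness of the conjugated cycles, $\pi$ being a genuine bijection, and surjectivity onto every $\lambda\vdash n$) only makes explicit steps the paper leaves implicit.
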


\subsubsection{Size of conjugacy classes of $S_n$}

For the symmetric group $S_n$, Eq.~(\ref{|C_x|}) yields
$$
|C_\sigma|={n!\over |\cen(\sigma)|}, \quad \sigma\in S_n.
$$
%
Now let us compute $|\cen(\sigma)|$. Assume $\pi\in\cen(\sigma)$ and write
$\sigma=\tau_1\tau_2\cdots\tau_s$, where the cycles $\tau_p$ are disjoint.
The condition $\pi\sigma\pi^{-1}=\sigma$ implies
$$
(\pi\tau_1\pi^{-1})(\pi\tau_2\pi^{-1})\cdots(\pi\tau_s\pi^{-1})
=\tau_1\tau_2\cdots\tau_s.
$$
Thus conjugation by $\pi$ must permute cycles of equal length. If the distinct
cycle lengths are $\lambda_j$, with multiplicities $m_j$, then cycles of
length $\lambda_j$ can be permuted in $m_j!$ ways, and each such cycle admits
$\lambda_j$ cyclic rotations. Hence
$$
|\cen(\sigma)|=\prod_j m_j!\,\lambda_j^{m_j}.
$$
Finally,
\begin{equation}
|C_\lambda|:=|C_\sigma|
=
{n!\over \prod_j m_j!\,\lambda_j^{m_j}}.
\label{|Clambda|}
\end{equation}

\newpage

\section{A worked-out example: the cyclic group}

We illustrate the general construction of the encoding states
$|u^\mu_\alpha\rangle$ for the cyclic group with the simplest nontrivial
example. Consider a memory device consisting of a ring of $n=4$ qubits, for
instance four cold atoms with $d=2$ internal states. If the atoms remain
perfectly localized and individually identifiable, the device can store the
full set of $2^4=16$ binary configurations. However, if coherence is lost and
the atoms behave effectively as classical balls or beads of two colors,
configurations related by cyclic drifts of the atoms become operationally
indistinguishable. The number of distinguishable messages is then equal to
the number of distinct necklaces that can be formed with those beads.

The relevant group, $\Gc=\langle r\rangle\cong\mathbb Z_4$, is generated by
$r=(0,1,2,3)$, representing a one-step clockwise drift of the atoms along the
ring. We use the standard cycle notation in which the content of site $i$ is
moved to site $i+1\pmod 4$. With the same notation, the remaining three elements are
$r^2=(0,2)(1,3)$, $r^3=(0,3,2,1)$, and
$r^4=e=(0)(1)(2)(3)$. Recalling that $c(\sigma)$ denotes the number
of cycles in the disjoint-cycle decomposition of $\sigma$, we have
$c(r)=c(r^3)=1$, $c(r^2)=2$, and $c(e)=4$. Figure~\ref{cycle_f} illustrates the action
of the four group elements.

\begin{figure*}[h]
\centering
\begin{minipage}[c]{0.23\textwidth}
\centering
\begin{tikzpicture}[>=stealth, scale=0.75]
\node[circle, draw, minimum size=4mm] (1) at (0,2) {$0$};
\node[circle, draw, minimum size=4mm] (2) at (2,2) {$1$};
\node[circle, draw, minimum size=4mm] (3) at (2,0) {$2$};
\node[circle, draw, minimum size=4mm] (4) at (0,0) {$3$};

\draw[->, bend left=12, shorten >=2mm, shorten <=2mm] (1) to node[above] {$r$} (2);
\draw[->, bend left=12, shorten >=2mm, shorten <=2mm] (2) to node[right] {$r$} (3);
\draw[->, bend left=12, shorten >=2mm, shorten <=2mm] (3) to node[below] {$r$} (4);
\draw[->, bend left=12, shorten >=2mm, shorten <=2mm] (4) to node[left] {$r$} (1);
\end{tikzpicture}
\end{minipage}%
\begin{minipage}[c]{0.23\textwidth}
\centering
\begin{tikzpicture}[>=stealth, scale=0.75]
\node[circle, draw, minimum size=4mm] (1) at (0,2) {$0$};
\node[circle, draw, minimum size=4mm] (2) at (2,2) {$1$};
\node[circle, draw, minimum size=4mm] (3) at (2,0) {$2$};
\node[circle, draw, minimum size=4mm] (4) at (0,0) {$3$};

\draw[->, bend left=18, shorten >=2mm, shorten <=2mm] (1) to node[pos=0.85, above right] {$r^2$} (3);
\draw[->, bend left=18, shorten >=2mm, shorten <=2mm] (3) to node[pos=0.85, below left] {$r^2$} (1);
\draw[->, bend left=18, shorten >=2mm, shorten <=2mm] (2) to node[pos=0.85, below right] {$r^2$} (4);
\draw[->, bend left=18, shorten >=2mm, shorten <=2mm] (4) to node[pos=0.85, above left] {$r^2$} (2);
\end{tikzpicture}
\end{minipage}%
\begin{minipage}[c]{0.23\textwidth}
\centering
\begin{tikzpicture}[>=stealth, scale=0.75]
\node[circle, draw, minimum size=4mm] (1) at (0,2) {$0$};
\node[circle, draw, minimum size=4mm] (2) at (2,2) {$1$};
\node[circle, draw, minimum size=4mm] (3) at (2,0) {$2$};
\node[circle, draw, minimum size=4mm] (4) at (0,0) {$3$};

\draw[->, bend right=12, shorten >=2mm, shorten <=2mm] (2) to node[above] {$r^3$} (1);
\draw[->, bend right=12, shorten >=2mm, shorten <=2mm] (3) to node[right] {$r^3$} (2);
\draw[->, bend right=12, shorten >=2mm, shorten <=2mm] (4) to node[below] {$r^3$} (3);
\draw[->, bend right=12, shorten >=2mm, shorten <=2mm] (1) to node[left] {$r^3$} (4);
\end{tikzpicture}
\end{minipage}%
\begin{minipage}[c]{0.23\textwidth}
\centering
\begin{tikzpicture}[>=stealth, scale=0.75]
\node[circle, draw, minimum size=4mm] (1) at (0,2) {$0$};
\node[circle, draw, minimum size=4mm] (2) at (2,2) {$1$};
\node[circle, draw, minimum size=4mm] (3) at (2,0) {$2$};
\node[circle, draw, minimum size=4mm] (4) at (0,0) {$3$};

\draw[->, shorten >=1.5mm, shorten <=1.5mm]
(1) edge[loop right, in=35, out=-35, looseness=8]
node[left] {$r^4$} (1);

\draw[->, shorten >=1.5mm, shorten <=1.5mm]
(2) edge[loop below, in=-55, out=-125, looseness=8]
node[above] {$r^4$} (2);

\draw[->, shorten >=1.5mm, shorten <=1.5mm]
(3) edge[loop left, in=-145, out=145, looseness=8]
node[right] {$r^4$} (3);

\draw[->, shorten >=1.5mm, shorten <=1.5mm]
(4) edge[loop above, in=125, out=55, looseness=8]
node[below] {$r^4$} (4);

\end{tikzpicture}
\end{minipage}

\caption{Action of the elements of $\mathbb Z_4$ on four sites arranged on a ring.}
\label{cycle_f}
\end{figure*}

Using P\'olya's enumeration theorem, Eq.~(\ref{burnside}), the number of
classically distinguishable messages is
$$
N_{\rm c}
=
\left|X/\Gc\right|
=
{1\over4}\sum_{k=0}^{3}2^{c(r^k)}
=
{1\over 4}\left(2^4+2^1+2^2+2^1\right)
=
6 .
$$
Thus, the $16$ binary strings used to encode the messages collapse into six
equivalence classes under cyclic rotations. These correspond to the six
binary necklaces shown in Fig.~\ref{necklaces_f}, with lexicographically
ordered representatives
$$
\xbf_1=0000,\quad \xbf_2=0001,\quad \xbf_3=0011,\quad
\xbf_4=0101,\quad \xbf_5=0111,\quad \xbf_6=1111.
$$
\begin{figure}[h]
\centering

\newcommand{\necklace}[5]{%
\begin{tikzpicture}[scale=0.7, baseline=-0.5ex]

\draw (-1,1)--(1,1)--(1,-1)--(-1,-1)--cycle;

\node[circle, draw, fill=#2, minimum size=1.4mm, inner sep=0pt] at (-1,1) {};
\node[circle, draw, fill=#3, minimum size=1.4mm, inner sep=0pt] at (1,1) {};
\node[circle, draw, fill=#4, minimum size=1.4mm, inner sep=0pt] at (1,-1) {};
\node[circle, draw, fill=#5, minimum size=1.4mm, inner sep=0pt] at (-1,-1) {};

\node at (0,-1.35) {\scriptsize $#1$};

\path[use as bounding box] (-1.15,-1.45) rectangle (1.15,1.15);

\end{tikzpicture}%
}

\makebox[\textwidth][c]{%
\necklace{0000}{white}{white}{white}{white}
\hspace{0.45cm}
\necklace{0001}{white}{white}{white}{black}
\hspace{0.45cm}
\necklace{0011}{white}{white}{black}{black}
\hspace{0.45cm}
\necklace{0101}{white}{black}{white}{black}
\hspace{0.45cm}
\necklace{0111}{white}{black}{black}{black}
\hspace{0.45cm}
\necklace{1111}{black}{black}{black}{black}
}

\caption{The six binary necklace configurations with four beads, where $\circ=0$ and $\bullet=1$.}
\label{necklaces_f}
\end{figure}

We now turn to the quantum setting. If coherent superpositions can be prepared
and preserved, the vectors $|\tilde u^k_j\rangle$ in Eq.~(\ref{ebc26.03.26-1}) define a natural
Fourier basis for the subspaces $\Span(O_{\xbf_j})$. The resulting
unnormalized states are shown in Table~\ref{ebc01.04.26-t1}. They diagonalize
the action of $r$, transforming by multiplicative phases which are precisely
the irreducible characters of $\mathbb Z_4$.

\begin{table}[h]
\centering
\scriptsize
\setlength{\tabcolsep}{8pt}
\renewcommand{\arraystretch}{1.5}

\begin{tabular}{|c|c|c|c|}
\hline
Orbit \# $j$
&
Orbit representative
&
Fourier basis elements of $\Span(O_{\xbf_j})$: $|\tilde u_j^k\rangle$
&
Irrep character
\\
\hline

$1$
&
$0000$
&
$|0000\rangle$
&
$1$
\\
\hline

$2$
&
$0001$
&
\begin{tabular}[c]{@{}c@{}}
$|0001\rangle+|1000\rangle+|0100\rangle+|0010\rangle$
\\
$|0001\rangle-i|1000\rangle-|0100\rangle+i|0010\rangle$
\\
$|0001\rangle-|1000\rangle+|0100\rangle-|0010\rangle$
\\
$|0001\rangle+i|1000\rangle-|0100\rangle-i|0010\rangle$
\end{tabular}
&
\begin{tabular}[c]{@{}c@{}}
$1$
\\
$i$
\\
$-1$
\\
$-i$
\end{tabular}
\\
\hline

$3$
&
$0011$
&
\begin{tabular}[c]{@{}c@{}}
$|0011\rangle+|1001\rangle+|1100\rangle+|0110\rangle$
\\
$|0011\rangle-i|1001\rangle-|1100\rangle+i|0110\rangle$
\\
$|0011\rangle-|1001\rangle+|1100\rangle-|0110\rangle$
\\
$|0011\rangle+i|1001\rangle-|1100\rangle-i|0110\rangle$
\end{tabular}
&
\begin{tabular}[c]{@{}c@{}}
$1$
\\
$i$
\\
$-1$
\\
$-i$
\end{tabular}
\\
\hline

$4$
&
$0101$
&
\begin{tabular}[c]{@{}c@{}}
$|0101\rangle+|1010\rangle$
\\
$|0101\rangle-|1010\rangle$
\end{tabular}
&
\begin{tabular}[c]{@{}c@{}}
$1$
\\
$-1$
\end{tabular}
\\
\hline

$5$
&
$0111$
&
\begin{tabular}[c]{@{}c@{}}
$|0111\rangle+|1011\rangle+|1101\rangle+|1110\rangle$
\\
$|0111\rangle-i|1011\rangle-|1101\rangle+i|1110\rangle$
\\
$|0111\rangle-|1011\rangle+|1101\rangle-|1110\rangle$
\\
$|0111\rangle+i|1011\rangle-|1101\rangle-i|1110\rangle$
\end{tabular}
&
\begin{tabular}[c]{@{}c@{}}
$1$
\\
$i$
\\
$-1$
\\
$-i$
\end{tabular}
\\
\hline

$6$
&
$1111$
&
$|1111\rangle$
&
$1$
\\
\hline

\end{tabular}

\caption{Unnormalized Fourier basis vectors $|\tilde u_j^k\rangle$ for~\mbox{$d=2$}
and~$n=4$, grouped according to the orbits of the cyclic action.}
\label{ebc01.04.26-t1}
\end{table}

Collecting the Fourier vectors with the same character yields the encoding
states $|u^\mu_\alpha\rangle$ shown in Table~\ref{ebc01.04.26-t2}. Here
$\mu\in\{1,-i,-1,i\}$ labels the irreducible representations of
$\mathbb Z_4$ through the character value on the generator $r$, while
$\alpha$ labels the corresponding multiplicities. One finds
$m_1=6$, $m_{-1}=4$, and $m_{\pm i}=3$, so that
$$
N_{\rm q}
=
m_1+m_{-i}+m_{-1}+m_i
=
6+3+4+3
=
16
=
2^4 .
$$
Thus, coherent encoding completely removes the effect of cyclic positional
uncertainty, restoring the identity-channel value. This finite example
illustrates the general result $N_{\rm q}=d^n$,
Eq.~(\ref{cyclic_general}), for cyclic permutation uncertainty.
Decoding is performed by projection onto the same Fourier basis
$|u^\mu_\alpha\rangle$.

\newpage

\begin{table}[h]
\centering
\scriptsize
\setlength{\tabcolsep}{10pt}
\renewcommand{\arraystretch}{1.5}

\begin{tabular}{|c|c|c|}
\hline
$\mu$ & $\alpha$ & $|u^\mu_\alpha\rangle$ \\
\hline

\multirow{6}{*}{$1$}
& $0$ & $|0000\rangle$ \\
\cline{2-3}
& $1$ & $|0001\rangle+|1000\rangle+|0100\rangle+|0010\rangle$ \\
\cline{2-3}
& $2$ & $|0011\rangle+|1001\rangle+|1100\rangle+|0110\rangle$ \\
\cline{2-3}
& $3$ & $|0101\rangle+|1010\rangle$ \\
\cline{2-3}
& $4$ & $|0111\rangle+|1011\rangle+|1101\rangle+|1110\rangle$ \\
\cline{2-3}
& $5$ & $|1111\rangle$ \\
\hline

\multirow{3}{*}{$i$}
& $0$ & $|0001\rangle-i|1000\rangle-|0100\rangle+i|0010\rangle$ \\
\cline{2-3}
& $1$ & $|0011\rangle-i|1001\rangle-|1100\rangle+i|0110\rangle$ \\
\cline{2-3}
& $2$ & $|0111\rangle-i|1011\rangle-|1101\rangle+i|1110\rangle$ \\
\hline

\multirow{4}{*}{$-1$}
& $0$ & $|0001\rangle-|1000\rangle+|0100\rangle-|0010\rangle$ \\
\cline{2-3}
& $1$ & $|0011\rangle-|1001\rangle+|1100\rangle-|0110\rangle$ \\
\cline{2-3}
& $2$ & $|0101\rangle-|1010\rangle$ \\
\cline{2-3}
& $3$ & $|0111\rangle-|1011\rangle+|1101\rangle-|1110\rangle$ \\
\hline

\multirow{3}{*}{$-i$}
& $0$ & $|0001\rangle+i|1000\rangle-|0100\rangle-i|0010\rangle$ \\
\cline{2-3}
& $1$ & $|0011\rangle+i|1001\rangle-|1100\rangle-i|0110\rangle$ \\
\cline{2-3}
& $2$ & $|0111\rangle+i|1011\rangle-|1101\rangle-i|1110\rangle$ \\
\hline

\end{tabular}

\caption{Unnormalized encoding states $|u^\mu_\alpha\rangle$ for~\mbox{$d=2$}
and~$n=4$, grouped according to the irrep label~$\mu$ of $\mathbb Z_4$.}
\label{ebc01.04.26-t2}

\end{table}

\newpage

\section{Proof of Eq.~(\ref{Nq polya})}

In this section we prove Eq.~(\ref{Nq polya}), which gives the number
$N_{\rm q}$ of classical messages that can be transmitted or stored using
ancilla-free quantum encoding when the symmetry group $\Gc$ is totally
orthogonal. The starting point is Eq.~(\ref{M_q}), valid in full generality:
\begin{equation}
N_{\rm q}=\sum_\mu m_\mu ,
\label{N_q}
\end{equation}
where $m_\mu$ denotes the multiplicity of irrep $\mu$ in the decomposition of
the total Hilbert space $(\mathbb{C}^d)^{\otimes n}$. From character theory,
each multiplicity is given by
$$
m_\mu
=
\langle \chi,\chi_\mu\rangle
=
\frac{1}{|\Gc|}
\sum_{\sigma\in\Gc}
\chi(\sigma)\,
\overline{\chi_\mu(\sigma)}
=
\frac{1}{|\Gc|}
\sum_{\sigma\in\Gc}
|\Xc_\sigma|\,
\overline{\chi_\mu(\sigma)} ,
$$
where the bar denotes complex conjugation. The second equality follows from
the definition of the inner product of
characters~$\langle\chi,\chi_\mu\rangle$. In the third equality we used
$$
\chi(\sigma)
=
\tr U(\sigma)
=
\sum_{\xbf\in X}
\langle\xbf|U(\sigma)|\xbf\rangle
=
|\Xc_\sigma| ,
$$
since the trace counts the basis states left invariant by the action of
$\sigma$.
Equation~(\ref{N_q}) then becomes
\begin{equation}
N_{\rm q}
=
\frac{1}{|\Gc|}
\sum_{\sigma\in\Gc}
|\Xc_\sigma|
\sum_\mu
\chi_\mu(\sigma),
\label{david_2}
\end{equation}
where we used that totally orthogonal groups have all irreps realizable
over~$\mathbb R$, and exchanged the sums over $\mu$ and~$\sigma$. The inner sum admits a simple
interpretation given by the following lemma.

\begin{lem}\label{david_1}
Let $\Gc$ be a finite totally orthogonal group. Then
$$
\sum_\mu\chi_\mu(\sigma)=n(\sigma),
$$
where $n(\sigma)$ denotes the number of square roots of~$\sigma$, and the sum runs over all irreducible characters $\chi_\mu$ of~$\Gc$.
\end{lem}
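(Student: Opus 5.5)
The plan is to compute the square-root counting function $n(\sigma)=|\{\tau\in\Gc:\tau^2=\sigma\}|$ as a class function and expand it in the basis of irreducible characters. Since $\tau^2=\sigma$ implies $(g\tau g^{-1})^2=g\sigma g^{-1}$, conjugation gives a bijection between square roots of $\sigma$ and of $g\sigma g^{-1}$. So $n$ is constant on conjugacy classes, and by completeness of irreducible characters we may write
$$
n(\sigma)=\sum_\mu c_\mu\,\chi_\mu(\sigma),\qquad c_\mu=\langle n,\chi_\mu\rangle=\frac{1}{|\Gc|}\sum_{\sigma\in\Gc} n(\sigma)\,\overline{\chi_\mu(\sigma)} .
$$

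The key step is to evaluate $c_\mu$ by reorganising the sum over square roots. Writing $n(\sigma)=\sum_{\tau\in\Gc}\delta_{\tau^2,\sigma}$ and exchanging the order of summation gives
$$
c_\mu=\frac{1}{|\Gc|}\sum_{\tau\in\Gc}\overline{\chi_\mu(\tau^2)}=\overline{\nu_\mu},
$$
where $\nu_\mu$ is exactly the Frobenius--Schur indicator quoted in the main text. For a totally orthogonal group every irrep is realizable over $\mathbb{R}$, so $\nu_\mu=+1$ for all $\mu$. Hence $c_\mu=1$ for every $\mu$, and the expansion reduces to $n(\sigma)=\sum_\mu\chi_\mu(\sigma)$, which is the statement of Lemma~\ref{david_1}.

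The only step needing care is the identification $c_\mu=\overline{\nu_\mu}$ together with the value $\nu_\mu=1$. The indicator's value is standard (Frobenius--Schur theorem) and is assumed in the paper via~\cite{james2001representations}, so I will cite it rather than reprove it. For completeness, one can remark why it is real: realizability over $\mathbb{R}$ makes $\chi_\mu$ real-valued, so $\overline{\nu_\mu}=\nu_\mu$. The same fact justifies replacing $\overline{\chi_\mu(\sigma)}$ by $\chi_\mu(\sigma)$ in Eq.~(\ref{david_2}).

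Substituting the lemma into Eq.~(\ref{david_2}) then yields $N_{\rm q}=|\Gc|^{-1}\sum_\sigma|\Xc_\sigma|\,n(\sigma)=|\Gc|^{-1}\sum_{\tau}|\Xc_{\tau^2}|=|\Gc|^{-1}\sum_\tau d^{c(\tau^2)}$, which is Eq.~(\ref{Nq polya}).
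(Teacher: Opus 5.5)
Your proof is correct and follows essentially the same route as the paper: you show that $n(\sigma)$ is a class function and expand it in irreducible characters. You then write $n(\sigma)=\sum_{\tau}\delta_{\tau^2,\sigma}$ and exchange sums, which identifies the coefficients $\langle n,\chi_\mu\rangle$ with the Frobenius--Schur indicators $\nu_\mu=+1$. You also keep the complex conjugate in $\langle n,\chi_\mu\rangle$ explicitly, which is slightly more careful than the paper; the paper drops it tacitly because the characters of a totally orthogonal group are real.
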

In other words,
$
n(\sigma)=|\Ss_\sigma|,
$
where
$
\Ss_\sigma:=\{\tau\in\Gc\;|\;\tau^2=\sigma\}
$
is the set of square roots of~$\sigma$.
The proof of this lemma uses the following proposition.

\begin{prop}
The number of square roots is a class function. Namely, if $\sigma$ and
$\sigma'$ belong to the same conjugacy class, then
$
n(\sigma)=n(\sigma').
$
\end{prop}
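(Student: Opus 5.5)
The plan is to exhibit an explicit bijection between the square roots of $\sigma$ and those of a conjugate $\sigma'$, using conjugation by the same group element. Since $\sigma$ and $\sigma'$ lie in the same conjugacy class, there is some $g\in\Gc$ with $\sigma'=g\sigma g^{-1}$. Define a map $\psi_g:\Ss_\sigma\to\Ss_{\sigma'}$ by
\begin{equation*}
\psi_g(\tau)=g\tau g^{-1}.
\end{equation*}

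The first step is to check that $\psi_g$ actually lands in $\Ss_{\sigma'}$. Conjugation is a group homomorphism, as in the proof of Lemma~\ref{ebc21.03.26-l1}. Hence for $\tau\in\Ss_\sigma$ we have
\begin{equation*}
(g\tau g^{-1})^2=g\tau g^{-1}g\tau g^{-1}=g\tau^2g^{-1}=g\sigma g^{-1}=\sigma',
\end{equation*}
so $\psi_g(\tau)\in\Ss_{\sigma'}$.

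The second step is to show that $\psi_g$ is a bijection.

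\begin{itemize}
\item \emph{Injectivity.} If $g\tau_1g^{-1}=g\tau_2g^{-1}$, multiplying by $g^{-1}$ on the left and by $g$ on the right gives $\tau_1=\tau_2$.
\item \emph{Surjectivity.} By the same computation with $g$ replaced by $g^{-1}$, the map $\psi_{g^{-1}}$ sends $\Ss_{\sigma'}$ into $\Ss_\sigma$, because $\sigma=g^{-1}\sigma'g$. Moreover $\psi_{g^{-1}}$ is a two-sided inverse of $\psi_g$.
\end{itemize}

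Since $\psi_g$ is a bijection between finite sets, $|\Ss_\sigma|=|\Ss_{\sigma'}|$, that is, $n(\sigma)=n(\sigma')$.

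No real obstacle is expected. The only point needing care is the identity $(g\tau g^{-1})^2=g\tau^2g^{-1}$, which is where the homomorphism property of conjugation enters. Note that total orthogonality plays no role here: the statement holds for any finite group, and the hypothesis only becomes relevant in Lemma~\ref{david_1} itself.
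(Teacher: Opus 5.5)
Your proof is correct and is the same as the paper's: conjugation by the element relating $\sigma$ and $\sigma'$ sends square roots of $\sigma$ to square roots of $\sigma'$, and conjugation by its inverse gives the inverse map. You are also right that total orthogonality plays no role in this proposition.
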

\noindent{\em Proof.}
Assume $\sigma$ and $\sigma'$ belong to the same conjugacy class, so that
$\sigma'=\pi\sigma\pi^{-1}$ for some $\pi\in\Gc$. Define
$
\psi:\Ss_\sigma\longrightarrow\Ss_{\sigma'}
$
by
$
\psi(\tau)=\pi\tau\pi^{-1}.
$
This map is well defined since, for every $\tau\in\Ss_\sigma$,
\begin{equation}
\psi(\tau)^2
=
(\pi\tau\pi^{-1})^2
=
\pi\tau^2\pi^{-1}
=
\pi\sigma\pi^{-1}
=
\sigma'.
\label{psi^2}
\end{equation}
Hence $\psi(\tau)\in\Ss_{\sigma'}$.
Moreover, $\psi$ is bijective. Indeed, its inverse is
$
\psi^{-1}(\tau')=\pi^{-1}\tau'\pi,
$
which maps $\Ss_{\sigma'}$ back to $\Ss_\sigma$ by the same argument used in
Eq.~(\ref{psi^2}). Hence,
$
|\Ss_\sigma|=|\Ss_{\sigma'}|,
$
or equivalently,
$
n(\sigma)=n(\sigma').
$
\hfill$\blacksquare$

\noindent{\em Proof of Lemma~\ref{david_1}.}
Since $n(\sigma)$ is a class function, it can be expanded in the basis of irreducible characters:
$$
n(\sigma)
=
\sum_\mu
\langle n,\chi_\mu\rangle\chi_\mu(\sigma).
$$
The coefficients of this expansion are
$$
\langle n,\chi_\mu\rangle
=
\frac{1}{|\Gc|}
\sum_{\sigma\in\Gc}
n(\sigma)\chi_\mu(\sigma)
=
\frac{1}{|\Gc|}
\sum_{\sigma\in\Gc}
\sum_{\tau\in\Gc}
\delta_{\sigma,\tau^2}\chi_\mu(\sigma)
=
\frac{1}{|\Gc|}
\sum_{\tau\in\Gc}
\chi_\mu(\tau^2)
=
\nu_\mu,
$$
where $\nu_\mu$ denotes the Frobenius--Schur indicator of the irrep~$\mu$~\cite{james2001representations}, and we used the Kronecker delta to represent $n(\sigma)$ as
\begin{equation}
n(\sigma)
=
\sum_{\tau\in\Gc}\delta_{\sigma,\tau^2}.
\label{delta_rep}
\end{equation}
Since $\Gc$ is totally orthogonal, all irreducible representations are realizable over~$\mathbb R$, and therefore $\nu_\mu=+1$ for all~$\mu$. Hence,
$$
\langle n,\chi_\mu\rangle=1,
$$
which proves the lemma. \hfill$\blacksquare$

We have used the well-known fact~\cite{james2001representations} that the
Frobenius--Schur indicator $\nu_\mu$ takes the values $+1$, $0$, or $-1$
depending on whether the irreducible representation~$\mu$ is respectively
realizable over $\mathbb R$, genuinely complex, or quaternionic
(pseudoreal).

Lemma~\ref{david_1} allows Eq.~(\ref{david_2}) to be rewritten as
$$
N_{\rm q}
=
\frac{1}{|\Gc|}
\sum_{\sigma\in\Gc}
|\Xc_\sigma|\, n(\sigma).
$$
Using again the representation of Eq.~(\ref{delta_rep}),
we obtain
$$
N_{\rm q}
=
\frac{1}{|\Gc|}
\sum_{\sigma\in\Gc}
|\Xc_\sigma|
\sum_{\tau\in\Gc}
\delta_{\sigma,\tau^2}
=
\frac{1}{|\Gc|}
\sum_{\tau\in\Gc}
|\Xc_{\tau^2}|.
$$
Equivalently,
$$
N_{\rm q}
=
\frac{1}{|\Gc|}
\sum_{\sigma\in\Gc}
|\Xc_{\sigma^2}|={1\over |\Gc|}
\sum_{\sigma\in\Gc} d^{c(\sigma^2)}
.
$$
The last step follows from the fact that a configuration remains invariant
under $\sigma$ precisely when all positions belonging to the same cycle of
$\sigma$ contain the same symbol. Since each cycle can be assigned any of the
$d$ symbols independently, the number of fixed strings is
$
|\Xc_\sigma|=d^{c(\sigma)}.
$

This expression should be compared with its classical counterpart, Eq.~(\ref{burnside}):
$$
N_{\rm c}
=
\frac{1}{|\Gc|}
\sum_{\sigma\in\Gc}
d^{c(\sigma)}.
$$
Since squaring group elements may increase the number of cycles in their
disjoint cycle decomposition, the sum may become
larger. In particular, elements satisfying $\sigma^2=e$ contribute the maximal
value $d^n$. Hence, one generally has~
$
N_{\rm q}\ge N_{\rm c}.
$


\newpage

\section{\boldmath The cycle index function and the cycle index series of $S_n$}

The {\em cycle index function of $\Gc\le S_n$}~\cite{vanLintWilson2001,polya2012combinatorial,Serre1977} is a polynomial in the variables $a_1,a_2,\dots,a_n$ defined as
$$
Z_\Gc(a_1,a_2,\dots,a_n):={1\over |\Gc|}\sum_{g\in\Gc}\prod_{k=1}^n a_k^{c_k(g)} ,
$$
where $c_k(g)$ is the number of cycles of length $k$ ($k$-cycles) of $g\in\Gc$.

Let us consider the cycle index function of the symmetric group $S_n$.
If we allow $m_k$ to take the value zero, we can write the partition 
$\lambda=[\lambda_1^{m_1},\lambda_2^{m_2},\dots,\lambda_s^{m_s}]$ as
$$
\lambda=[1^{m_1},2^{m_2},\dots,n^{m_n}],
$$
and the condition $\lambda\vdash n$ can be expressed as
\begin{equation}
\sum_{k=1}^n m_k k=n,\quad m_k\ge 0.
\label{ebc29.04.26-2}
\end{equation}
Then,
\begin{multline*}
Z_{S_n}(a_1,a_2,\dots,a_n)
={1\over n!}\sum_{\sigma\in S_n}\prod_{k=1}^n a_k^{m_k(\sigma)}=\\
{1\over n!}\sum_{\lambda\vdash n}\left(\sum_{\sigma\in C_\lambda}\prod_{k=1}^n a_k^{m_k(\sigma)}\right)
= {1\over n!}\sum_{\lambda\vdash n}|C_\lambda|\prod_{k=1}^n a_k^{m_k}=\\
{1\over n!}\!\!\sum_{\two{m_1,m_2,\dots,m_n \ge 0}{\sum_{k=1}^n m_k k=n}}
{n!\over \prod_{k=1}^n m_k! k^{m_k}} \prod_{k=1}^n a_k^{m_k} ,
\end{multline*}
where we have used Eq.~(\ref{|Clambda|}).
 
We next introduce the generating function of $Z_{S_n}$, known as the {\em cycle index series of $S_n$}:
\begin{equation}
\sum_{n\ge0} x^n Z_{S_n}(a_1,a_2,\dots,a_n)=
\exp\left(\sum_{k\ge1}{a_k x^k\over k}\right).
\label{ebc30.04.26-1}
\end{equation}
This equation must be understood in the following sense: the coefficient of $x^n$ in the formal power expansion of the right-hand side is precisely $Z_{S_n}(a_1,a_2,\dots,a_n)$. This is a purely algebraic relation, as is usual in combinatorics; issues of convergence do not arise here.

\noindent{\em Proof of Eq.~(\ref{ebc30.04.26-1})}
\begin{equation}
x^n Z_{S_n}(a_1,a_2,\dots,a_n)=
\hspace*{-1em}
\sum_{\two{m_1,m_2,\dots,m_n \ge 0}{\sum_{k=1}^n m_k k=n}}
{x^n\prod_{k=1}^n a_k^{m_k}\over \prod_{k=1}^n m_k! k^{m_k}} =
\hspace*{-1em}
\sum_{\two{m_1,m_2,\dots,m_n \ge 0}{\sum_{k=1}^n m_k k=n}}
\prod_{k=1}^n {1\over m_k!}
\left({a_k x^k \over k}\right)^{m_k},
\label{ebc30.04.26-2}
\end{equation}
where the last equality is justified by Eq.~(\ref{ebc29.04.26-2}):
$$
x^n=x^{\sum_{k=1}^n k m_k}
=\prod_{k=1}^n (x^k)^{m_k}.
$$
Now consider
\begin{equation}
\prod_{k\ge 1} \sum_{m_k\ge0} {1\over m_k!}\left({a_k x^k \over k}\right)^{m_k}.
\label{ebc30.04.26-3}
\end{equation}
For any integer $n\ge0$, the term of order $x^n$ is obtained from those choices of $(m_k)_{k\ge1}$ such that $\sum_{k\ge1} k m_k = n$. Since $k\ge1$, necessarily $m_k=0$ for all $k>n$. Hence, at fixed order $n$, the condition reduces to $\sum_{k=1}^n k m_k = n$, and the product $\prod_{k\ge1}$ reduces to $\prod_{k=1}^n$. Therefore, the contribution of order $x^n$ in Eq.~(\ref{ebc30.04.26-3}) coincides with Eq.~(\ref{ebc30.04.26-2}). Therefore,
$$
\sum_{n\ge0}x^nZ_{S_n}(a_1,a_2,\dots, a_n)=
\prod_{k\ge 1} \sum_{m_k\ge0} {1\over m_k!}\left({a_k x^k \over k}\right)^{m_k}=
\prod_{k\ge 1} \exp\left({a_k x^k\over k}\right)=
\exp\left(\sum_{k\ge1}{a_k x^k\over k}\right).
$$
This completes the proof.\hfill $\blacksquare$



\end{document}